\documentclass[12pt]{article}

\usepackage{amsmath}
\usepackage{amssymb}
\usepackage{mathtools}
\usepackage{amsthm}
\usepackage{natbib}

\usepackage{graphicx}
\usepackage{epstopdf}

\newtheorem{lemma}{Lemma}
\newtheorem{theorem}{Theorem}
\newtheorem{corollary}{Corollary}

\newcommand{\qsd}{quasi-stationary distribution}

\newcommand{\OU}{Ornstein--Uhlenbeck}
\newcommand{\ds}{\displaystyle}

\newcommand\Z{\ensuremath{\mathbb{Z}}}
\newcommand\R{\ensuremath{\mathbb{R}}}

\newcommand\E{\ensuremath{\mathbb{E}}}

\newcommand{\bl}{\boldsymbol{l}}
\newcommand{\by}{\boldsymbol{y}}

\newcommand{\bn}{\boldsymbol{n}}

\newcommand{\bu}{\boldsymbol{u}}

\newcommand{\bz}{\boldsymbol{z}}
\newcommand{\bX}{\boldsymbol{X}}
\newcommand{\bZ}{\boldsymbol{Z}}
\newcommand{\bXN}{\boldsymbol{X}^{(N)}}
\newcommand{\bZN}{\boldsymbol{Z}^{(N)}}

\newcommand{\myprime}{\,\prime}

\newcommand{\Var}{\text{Var}}
\newcommand{\Cov}{\text{Cov}}
\newcommand{\sgn}{\text{sgn}}
\newcommand{\tr}{\text{tr}}
\newcommand{\norm}[1]{\left\lVert #1 \right\rVert}

\newcommand{\M}{\mathcal{M}}
\newcommand{\bzero}{\boldsymbol{0}}

\newcommand{\bx}{\boldsymbol{x}}

\title{Persistence, Thresholds, and Trait Composition in a Regulated
Mutation--Selection Model}

\author{Phil Pollett}

\begin{document}
\maketitle

\begin{center}
\begin{minipage}{14cm}
{\bf Abstract} \ \ 
{\setlength{\parindent}{1.5em}
We study a population model in which individuals carry one of two traits
and evolve under mutation, selection, and density-dependent regulation.
A deterministic large-population limit yields a nonlinear system
coupling logistic growth with mutation–selection dynamics. We identify
threshold conditions governing extinction, persistence, and long-term
trait composition. In particular, mutation induces an effective
mortality rate that determines whether the population can be sustained.
When inheritance dominates mutation, a second threshold emerges:
population establishment depends on initial trait composition as well as
overall growth rates. Although extinction ultimately occurs, the system
typically exhibits long-lived quasi-equilibrium behaviour. A diffusion
approximation provides a tractable description of this, and
reveals a transition in the sign of trait correlations. The model thus
illustrates how mutation, selection, and resource limitation jointly
shape both ecological persistence and evolutionary outcomes.
}
\end{minipage}
\end{center}

\smallskip
\noindent {\bf MSC 2020:} Primary 92D25; Secondary 60J27, 60J70.

\medskip
\noindent {\bf Keywords:} Mutation–selection dynamics; density
dependence; stochastic population models; persistence; threshold
phenomena; quasi stationarity

\section{Introduction}

Understanding how mutation and selection interact in populations subject
to resource limitation is a central problem in theoretical population
biology. Classical models either describe changes in trait frequencies
under a fixed population size or analyse population growth without
explicit evolutionary structure. In many biological systems, however,
these processes interact: population size influences selection,
while mutation and selection shape the demographic trajectory. Capturing
this feedback between ecological and evolutionary dynamics remains a key
challenge.

In this paper we study a continuous-time stochastic model in which
individuals carry one of two traits and evolve under mutation,
selection, and density-dependent regulation. Individuals reproduce with
imperfect inheritance, allowing offspring to differ from their parents,
while mortality depends on trait type, and growth is limited by a global
population ceiling. This leads to a coupled system in which population
size and trait composition evolve together. Despite its simplicity, the
model captures key features of biological systems such as mutation load,
phenotypic switching, and competition between selectively unequal types.

Our analysis identifies threshold conditions governing both persistence
and trait composition. A central finding is that mutation induces an
effective mortality rate that determines whether the population can be
sustained. When reproduction exceeds this rate, the system admits a
positive equilibrium whose composition reflects a balance between
mutation and selection. When inheritance dominates mutation, a second
threshold emerges: population establishment depends not only on overall
growth rates, but also on the initial trait composition, so that some
initial configurations lead to growth while others lead to extinction.
This represents a qualitative departure from classical models in which
persistence is determined solely by net growth rates.

To analyse these dynamics, we combine a deterministic approximation with
a diffusion approximation, both valid in the large-population limit. The
deterministic system captures overall trends, while the stochastic
approximation provides a tractable description of long-lived 
quasi-equilibrium behaviour. The stochastic analysis also reveals a transition
in the sign of trait correlations, reflecting a shift between
fluctuations in total population size and fluctuations in trait
imbalance.

The restriction to two traits is both biologically natural and
mathematically advantageous. Many systems of interest can be reduced to
a binary description, including wild-type versus mutant alleles at a
single locus, as in classical population genetics models
\cite{ewens2004,nowak2006}. Two-type formulations also arise naturally
in microbial and viral evolution, for example in the study of
drug-sensitive versus drug-resistant strains, and in models of
phenotypic switching between active and dormant states
\cite{kussell2005a,kussell2005b}. In adaptive dynamics, even
high-dimensional trait spaces are often effectively reduced to
interactions between a resident population and a single invading mutant
\cite{champagnat2006}. From a mathematical perspective, our
two-dimensional setting permits a detailed analysis of the coupled
dynamics of population size and trait composition, while remaining
sufficiently simple to allow explicit characterisations of equilibria
and threshold behaviour.

The remainder of the paper is organised as follows. In
Section~\ref{section:themodel} we introduce the stochastic model and
derive its deterministic approximation. Subsequent sections analyse key
special cases, characterise equilibria and their stability, and develop
diffusion approximations to describe fluctuations in trait numbers in
quasi equilibrium. These results together provide a unified account of how
mutation, selection, and density dependence shape both persistence and
long-term trait composition.

\section{The model}
\label{section:themodel}

Each individual in the population has one of two traits, $0$ or~$1$. Let
$n_i(t)$ be the number of individuals with trait $i$ at time $t\geq 0$.
Individuals with either trait give birth at the same per-capita rate
$\lambda>0$, the offspring inheriting their trait with probability $q$,
or otherwise changing with probability $1-q$. Without loss of generality
take $0<q\leq 1$. Individuals with trait~$i$ die at per-capita rate
$\mu_i>0$, $i\in \{0,1\}$. We model selection only through the death
rate. Let us take $\mu_1\geq \mu_0$, so that trait~$0$ is selectively
superior. Mutation continually replenishes the inferior type, 
offsetting its selective disadvantage.

Let $(\bn(t),\, t\geq 0)$, with $\bn=(n_0,n_1)$, be a Markov chain in
continuous time taking values in $S=\{\bn\in \Z_+^2: n_0+n_1\leq N\}$,
whose non-zero transition rates are given by
\begin{equation}\label{Q}
\begin{aligned}
q((n_0,n_1),(n_0+1,n_1)) &=\lambda (n_0 q + n_1 (1-q)) (1-(n_0+n_1)/N),
\\
q((n_0,n_1),(n_0,n_1+1)) &=\lambda (n_0 (1-q) + n_1 q) (1-(n_0+n_1)/N),
\\
q((n_0,n_1),(n_0-1,n_1)) &=\mu_0 n_0,
\\
q((n_0,n_1),(n_0,n_1-1)) &=\mu_1 n_1.
\end{aligned}
\end{equation}
The total rate out of state~$\bn$ is
\begin{equation}
q(n_0,n_1)= \lambda (n_0+n_1)(1- (n_0+n_1)/N)+\mu_0 n_0 + \mu_1 n_1.
\label{qn}
\end{equation}
State $\bzero=(0,0)$ is an absorbing state, corresponding to
population extinction. 
It is accessible from all other states, which form a single irreducible 
class. Indeed since $S$ is finite, the extinction state
is reached with probability~$1$.

In the case $\mu_1=\mu_0=\mu$ (no selection), the total number in the
population $s(t)=n_0(t)+n_1(t)$ is a birth-death process taking values
in $\{0,1,\dots,N\}$ with birth rates $\lambda s(1-s/N)$ and death rates
$\mu s$. This is the continuous-time SIS (epidemic) model (Weiss and
Dishon~\cite{WD71}). It provides a useful comparison: although
extinction occurs almost surely, for sufficiently large~$\lambda$, the
system remains near a quasi-equilibrium (endemic) state over long time
scales. This raises the question of whether similar metastable behaviour
arises in the present model, and what determines the resulting trait
composition. One means of addressing these questions is via a
deterministic approximation that is expected to be faithful in the limit
as~$N$ becomes large. An approximating model is easily identified
because the transition rates~(\ref{Q}) are {\em density dependent\/}
(refer to Kurtz~\cite{Kur70}). Following the standard programme outlined
in~\cite{Kur70} (and detailed in Appendix~\ref{appendix2}) we arrive at
the following system of differential equations:
\begin{equation}
\dot{\bx}
= 
F(\bx),\quad \bx\in E,
\label{ODE1}
\end{equation}
where $E=\{\bx\in [0,1]^{\{0,1\}}: x_0+x_1\leq 1\}$, and $F$ has
components
\begin{align}
F_0(\bx)=\lambda(q x_0 + (1-q)x_1)(1-x_0-x_1)-\mu_0 x_0,
\label{F0x}
\\
F_1(\bx)=\lambda((1-q) x_0 + q x_1)(1-x_0-x_1)-\mu_1 x_1,
\label{F1x}
\end{align}
with $x_i(t)$ to be interpreted as the (large-$N$) density of
individuals with trait~$i$ at time~$t$, that is {\em relative to the
population ceiling\/}. In particular, $\bX(t):=\bn(t)/N$ 
can be approximated by
$\bx(t)$ when $N$ is large; a precise statement is given in 
Theorem~\ref{LLN} of Appendix~\ref{appendix2}.

Note that on summing (\ref{F0x}) and (\ref{F1x}) we find that 
$m=x_0+x_1$ satisfies
\begin{equation}
\dot{m}= \lambda m(1-m)-\mu_0 x_0 -\mu_1 x_1,
\label{PKPh}
\end{equation}
which seems very natural; for example, if $\mu_0=\mu_1=\mu$ then (\ref{PKPh})
reduces to Verhulst model~\cite{Ver1838} $\dot{m}=\lambda m(\rho-m)$,
where $\rho=1-\mu/\lambda$, which is known to be a faithful
approximation to the SIS model when~$N$ is large (Theorem~1 of \cite{Pol24}). 

Clearly $\bzero=(0,0)$ is an equilibrium point of (\ref{ODE1}). In fact,
it is the only equilibrium on the boundary of $E$ (on setting
$F_0(\bx)=F_1(\bx)=0$, we see that $x_0=0$ implies $x_1=0$, and vice
versa, and that $x_0+x_1=1$ would entail $x_0=x_1=0$). Furthermore, the
interior of $E$ is attracting, so that solutions to (\ref{ODE1}) which
start in $E$, remain in $E$. Indeed, (\ref{ODE1}) has a unique solution
for all starting states in~$E$. This is explained in
Appendix~\ref{appendix1}.

We shall see that our model exhibits a range of interesting behaviour.
Figure~\ref{fig:range} shows simulations of the stochastic
model~(\ref{Q}), together with trajectories of the deterministic
model~(\ref{ODE1}). These were evaluated using Matlab\footnote{Matlab Version:
25.2.0.3042426 (R2025b) Update 1, Natick, Massachusetts: The MathWorks
Inc.; 2025.} ({\tt ode45}). The numbers relative to $N$ are plotted
against~$t$: trait~$0$ (red), trait~$1$ (blue), and, in plots~(b)
and~(d), the total population (black). A range of behaviour is apparent: (a)
evanescence (numbers die out quickly), (b) coexistence (both traits
survive for long periods), (c) survival of the selectively superior
trait~$0$, and (d) coexistence with the same limiting proportions.
Of particular note is the quasi stationarity apparent in
plots (b)-(d). Notice also in the evanescent case (plot(a)) that the
selectively inferior trait~$1$ dies out and reappears through mutation,
before trait~$0$ eventually disappears.

\begin{figure}[htbp]
    \centering
\includegraphics[width=0.46\textwidth]{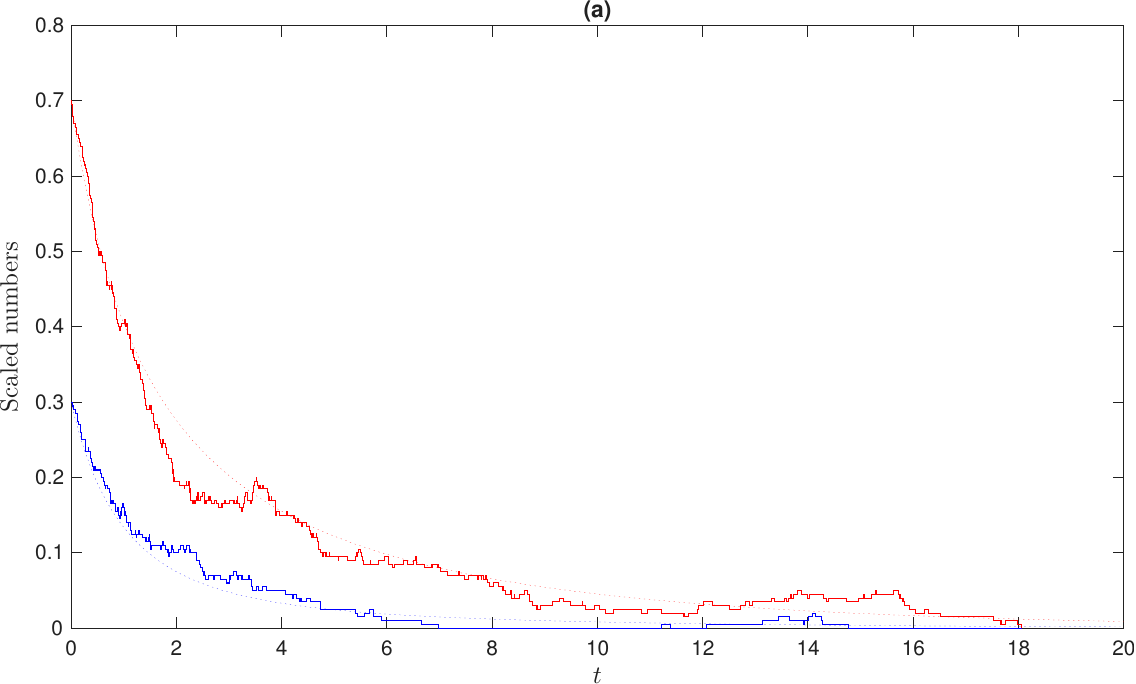}
\includegraphics[width=0.46\textwidth]{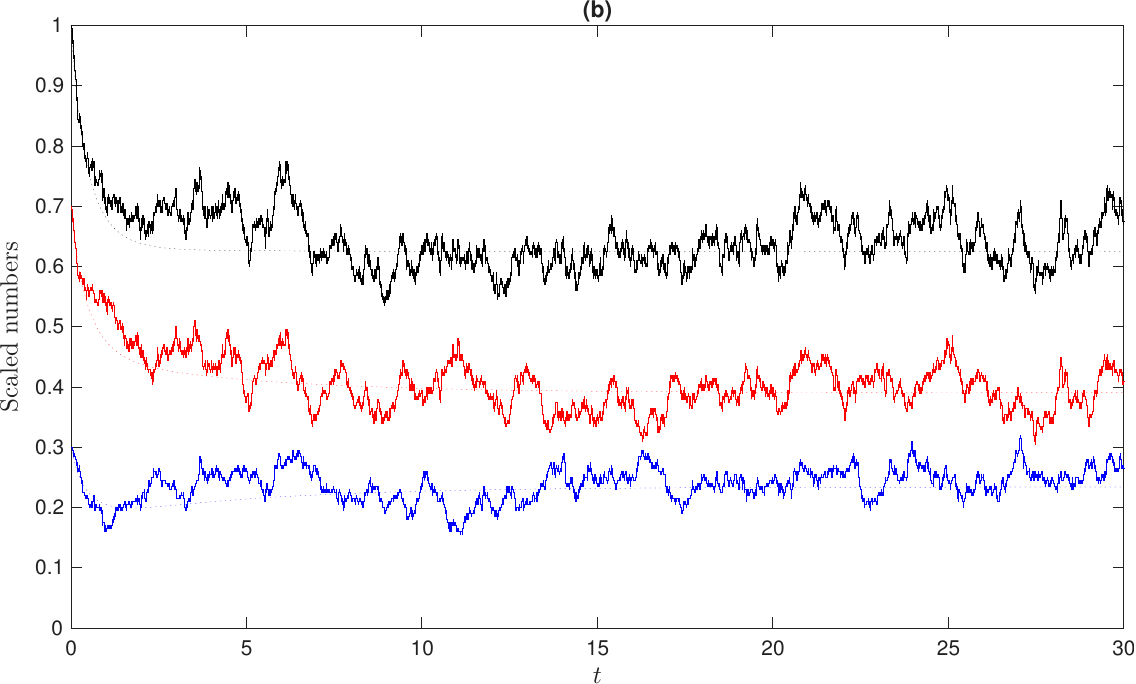}
\includegraphics[width=0.46\textwidth]{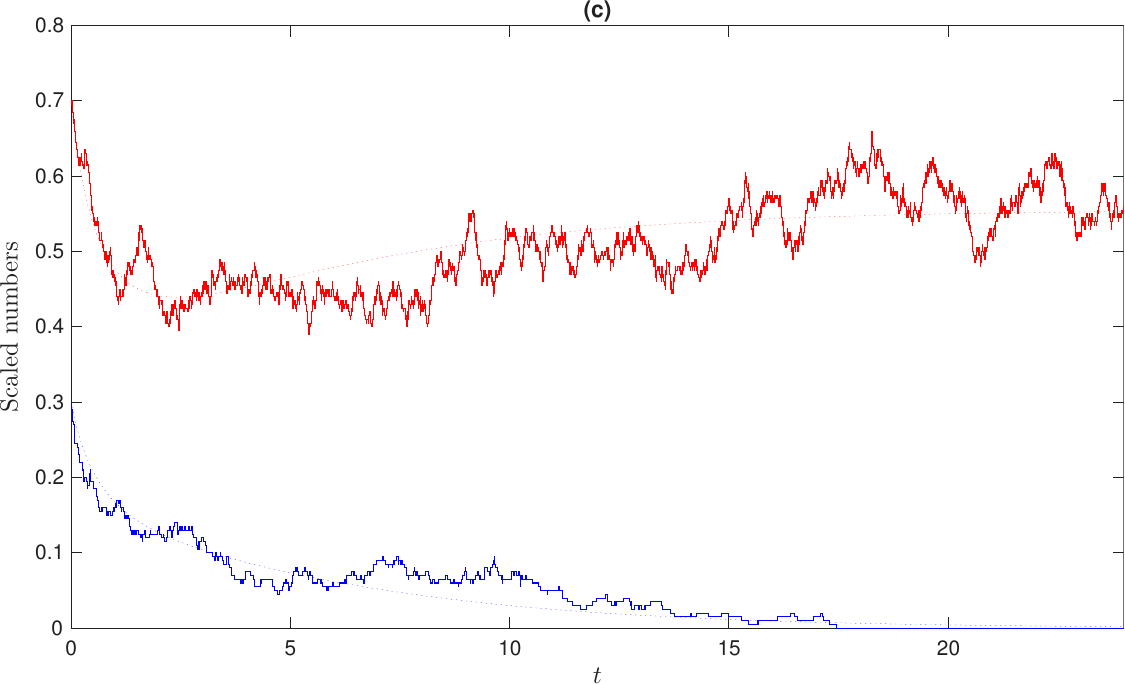}
\includegraphics[width=0.46\textwidth]{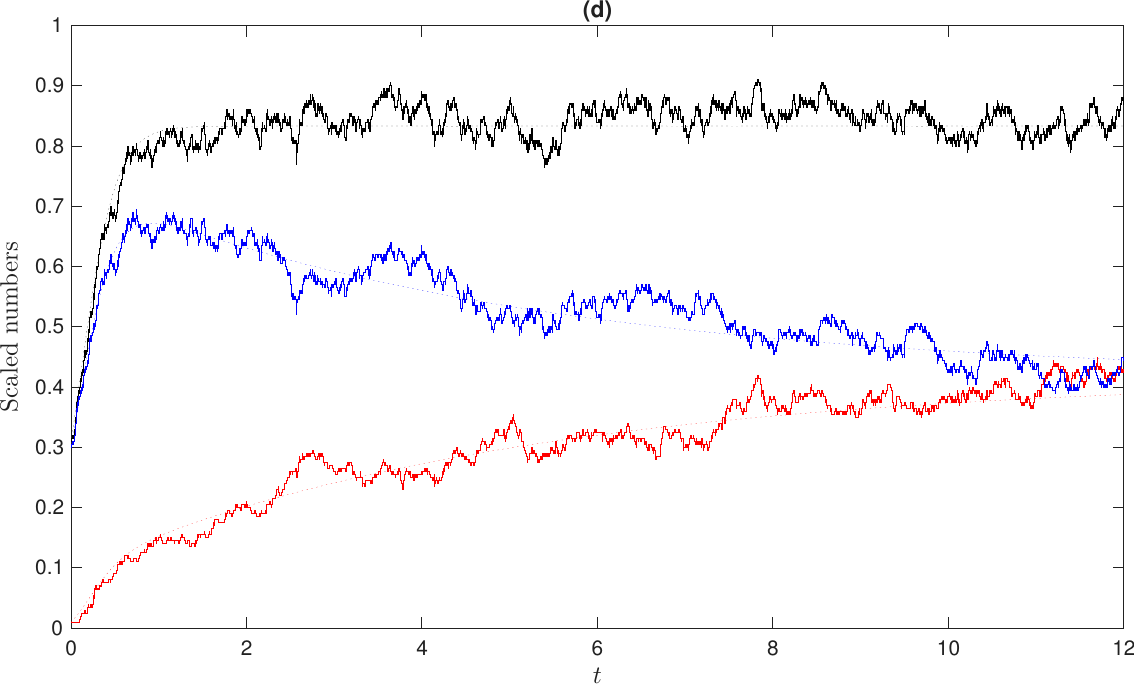}
    \caption{Simulations of numbers relative to $N$,
together with trajectories of the deterministic model:
trait~$0$ (red), trait~$1$ (blue), and, in
plots~(b) and~(d) the total number (black).
All simulations have $N=200$.
Plot (a) has $\lambda = 0.6$, $q = 0.9$, $\mu_0 = 0.7$, $\mu_1 = 1$.
Plot (b) has $\lambda = 2.5$, $q = 0.9$, $\mu_0 = 0.9$, $\mu_1 = 1$.
Plot (c) has $\lambda = 1.8$, $q = 1$, $\mu_0 = 0.8$, $\mu_1 = 1$.
Plot (d) has $\lambda = 6$, $q = 0.9$, $\mu_0=\mu_1 = 1$.}
    \label{fig:range}
\end{figure}

Before addressing questions concerning equilibria and stability,
we will look at some special cases where there are explicit solutions 
to~(\ref{ODE1}).

\subsection{The case $\mu_1=\mu_0=\mu$ (no selection)}
\label{section:noselection}

In this case (\ref{F0x}) and (\ref{F1x}) reduce to
\begin{align*}
F_0(\bx) &=\lambda(q x_0 + (1-q)x_1)(1-m)-\mu x_0,
\\
F_1(\bx) &=\lambda((1-q) x_0 + q x_1)(1-m)-\mu x_1,
\end{align*}
with $m=x_0+x_1$. We assume that $q<1$ here. The case $q=1$
(no mutation) is dealt with on its own in Section~\ref{qequalto1}.
We have already noted that
$m$ follows the Verhulst model
$\dot{m} =\lambda m\left(\rho -m \right)$,
where $\rho=1-\mu/\lambda$. It has the explicit solution
\[
m(t)=
\begin{cases}
\begin{aligned}
&{\ds \frac{\rho m_0}{m_0 + (\rho-m_0)e^{-\lambda\rho t}},}
&\text{if } \lambda\neq \mu,\\[1ex]
&{\ds \frac{m_0}{1+m_0 \mu t},}
&\text{if } \lambda=\mu,
\end{aligned}
\end{cases}
\]
where $m_0=m(0)$ ($\geq 0$).
It follows that $m(t)\to \rho$ when $\lambda>\mu$. Otherwise,
$m(t)\to 0$.
Next we will obtain an ODE for the difference $d(t)=x_1(t)-x_0(t)$, but
before proceeding we note that all points $\bx$ with $x_0=x_1$ form an
invariant manifold; the components follow the Verhulst model with
parameters $\lambda^{\myprime}=2\lambda$ and $\rho^{\myprime}=\frac12
\rho$, and so in particular, if $x_1(0)=x_0(0)=\frac12 m_0$, then
$x_1(t)=x_0(t)=\frac12 m(t)$ ($d(t)=0$ for all $t$). Now, a simple
calculation leads to
$$ 
{\dot d} = (\lambda(2q-1)(1-m)-\mu)d,
$$
and so,
$$
d(t) = d_0 \exp\left( -\mu t + \lambda(2q-1)\int_0^t 1-m(s)\, ds
\right),
$$
where $d_0:=d(0)=x_1(0)-x_0(0)$. 
It is routine then to show that
$$
d(t) 
= 
\begin{cases}
{\ds 
d_0 e^{-\mu t}
\left(
  \frac{\rho e^{\mu t}}{m_0 + (\rho-m_0)e^{-\lambda\rho t}} 
\right)^{2q-1},
}
\qquad &\text{if } \lambda\neq \mu, \\[12pt]
d_0 (1+ m_0 \mu t)^{1-2q} e^{-2(1-q)\mu t},
\qquad &\text{if } \lambda= \mu.
\end {cases}
$$
This encapsulates the ``neutral'' case $q=1/2$, where $\dot{d}=-\mu d$:
$d(t)=d_0e^{-\mu t}$. We may now recover $\bx(t)$ using 
$x_0= \frac12 (m-d)$ and $x_1= \frac12 (m+d)$. 

Notice that if $\lambda=\mu$, then $d(t)\to 0$ whatever the value of $q$. 
If $\lambda>\mu$, then, since $\lambda \rho = \lambda - \mu$ ($\neq 0$),
we have, for large~$t$,
$$
d(t)
\sim
d_0 \left(\frac{\rho}{m_0}\right)^{2q-1}e^{-2\mu (1-q)t},
$$
which implies that $d(t)\to 0$. If $\lambda<\mu$ we get 
$$
d(t)
\sim
d_0 \left(\frac{\rho}{\rho-m_0}\right)^{2q-1}e^{(\lambda(2q-1)-\mu)t},
$$
noting that $\rho<0<m_0$,
which also implies that $d(t)\to 0$, because
$\lambda(2q-1)-\mu < \lambda - \mu <0$. 
Thus, the two traits equalise in the long term, despite continuing
mutation.
We deduce that if $\lambda\leq \mu$,
the $\bx(t)\to\bzero$, while if $\lambda>\mu$, then
$\bx(t)\to ( \frac12 \rho, \frac12 \rho)$, noting that $\rho>0$.
This case is depicted in Figure~\ref{fig:range}d
($\lambda=6$, $\mu_0=\mu_1=1$, $\rho=5/6$).

\subsection{The case $q=1$ (no mutation)}
\label{qequalto1}

In this case (\ref{F0x}) and (\ref{F1x}) reduce to
\begin{align*}
F_0(\bx) &= x_0(\lambda(1-m)-\mu_0) = \lambda x_0 (\rho_0-m),
\\
F_1(\bx) &= x_1(\lambda(1-m)-\mu_1) = \lambda x_1 (\rho_1-m),
\end{align*}
with $m=x_0+x_1$, where $\rho_i=1-\mu_i/\lambda$.
Note that $\rho_1\leq \rho_0$ because
we have assumed that $\mu_1\geq \mu_0$.
We can solve for $\bx(t)$ explicitly.
First notice that if $x_0\equiv 0$, then
$\dot x_1 = \lambda x_1 (\rho_1-x_1)$,
and similarly when $x_1\equiv 0$;
the axes are invariant with the other component following 
the Verhulst model.
So let us assume that $x_0(t)>0$ and let $r(t)=x_1(t)/x_0(t)$. Then,
$$ 
\frac{d}{dt} \log(r) 
= \frac{\dot{x_1}}{x_1} - \frac{\dot{x_0}}{x_0} =
\lambda(\rho_1-\rho_0).
$$
So, $r(t) =r_0 e^{\lambda (\rho_1-\rho_0)t} =r_0 e^{-(\mu_1-\mu_0)t}$,
where $r_0=r(0)=x_1(0)/x_0(0)$. This includes the trivial case
$\mu_0=\mu_1=\mu$, when $r(t)=r_0$ for all $t$. 
Not surprisingly, if $\mu_1>\mu_0$ then $x_1(t)\to 0$.
Now, $x_1=r x_0$, and so $m=(1+rx_0)$,
and therefore, $\dot x_0 = \lambda x_0 (\rho_0-(1+r)x_0)$.
If we set $y(t)=1/x_0(t)$, then $\dot{y} +\lambda \rho_0 y =\lambda(1+r)$,
and so
$$
y(t)= e^{-\lambda\rho_0 t}
\left( y(0) + \lambda \int_0^t e^{\lambda\rho_0 s}(1+r(s)) \, ds\right).
$$
But,
$e^{\lambda\rho_0 s}(1+r(s))
=
e^{\lambda\rho_0 s}+ r_0 e^{\lambda\rho_0 s} e^{\lambda(\rho_1-\rho_0) s}
=
e^{\lambda\rho_0 s}+ r_0 e^{\lambda\rho_1 s}$.
When integrating this, there are special cases to consider: when one
or both of $\rho_0$ and $\rho_1$ is equal to $0$, that is when
$\lambda=\mu_0< \mu_1$ or $\lambda=\mu_1 > \mu_0$, 
or $\lambda=\mu_0=\mu_1$ ($=\mu$). We begin with the generic case.

\begin{itemize}
\item[(1)]
When $\rho_0,\rho_1\neq 0$, that is, $\lambda\neq \mu_0$ 
and $\lambda\neq \mu_1$ (including $\mu_0=\mu_1=\mu$ and $\lambda\neq
\mu$),
$$
x_0(t)= \rho_0 \rho_1 x_0(0)e^{\lambda \rho_0 t} /u(t),
\qquad
x_1(t)= \rho_0 \rho_1 x_1(0) e^{\lambda \rho_1 t}/u(t),
$$
where
$u(t)=\rho_0\rho_1 - \rho_1 x_0(0) (1-e^{\lambda \rho_0 t}) 
-\rho_0 x_1(0) (1-e^{\lambda \rho_1 t})$.

\item[(2)]
When $\rho_1=0$ and $\rho_0>0$, 
that is, when $\lambda=\mu_1>\mu_0$, 
$$
x_0(t) =
\lambda\rho_0 e^{\lambda\rho_0 t} x_0(0)/v_0(t),
\qquad
x_1(t) = \lambda\rho_0 x_1(0)/v_0(t),
$$
where
$v_0(t)=\lambda^2\rho_0 x_1(0) t + \lambda x_0(0) e^{\lambda\rho_0 t}
- \lambda(x_0(0)- \rho_0)$.

\item[(3)]
When $\rho_0=0$ and $\rho_1< 0$, 
that is, when $\lambda=\mu_0< \mu_1$,
$$ 
x_0(t) = \lambda\rho_1 x_0(0)/v_1(t),
x_1(t) = \lambda\rho_1 e^{\lambda\rho_1 t} x_1(0)/v_1(t),
$$
where
$v_1(t)=\lambda^2\rho_1 x_0(0) t + \lambda x_1(0) e^{\lambda\rho_1 t}
- \lambda(x_1(0)- \rho_1)$.

\item[(4)]
When $\rho_0=\rho_1=0$, that is, $\lambda=\mu_0=\mu_1$, 
$$
x_0(t) = \frac{x_0(0)}{1+\lambda t (x_0(0) + x_1(0))}, 
\qquad
x_1(t) = \frac{x_1(0)}{1+\lambda t (x_0(0) + x_1(0))}.
$$
\end{itemize}

\noindent
There is a variety of limiting behaviour.
\begin{itemize}
\item[(a)]
If $\lambda\leq \mu_0$ ($\leq \mu_1$), then $\bx(t)\to(0,0)$.
\item[(b)]
If $\lambda>\mu_0$ and $\mu_1>\mu_0$, then $\bx(t)\to (\rho_0,0)$
($\rho_0=1-\mu_0/\lambda >0$).
This case is depicted in Figure~\ref{fig:range}c
($\mu_0 = 0.8$, $\lambda = 1.8$ $\mu_1 = 1$, $\rho_0=5/9$).
\item[(d)]
If $\mu_0=\mu_1$ ($=\mu)$, and $\lambda>\mu$,
$$
\bx(t)\to 
\left(
\frac{\rho x_0(0)}{x_0(0)+ x_1(0)},
\frac{\rho x_1(0)}{x_0(0)+ x_1(0)}
\right).
$$
\end{itemize}
This makes sense. Trait~$1$ is certain to die out,
reflecting its selective disadvantage ($\mu_1 \ge \mu_0$).
If the death rate trait~$0$
is sufficiently small it will survive; otherwise it will also die out.
In Case (d) (no mutation or selection) the traits coexist, and
the limit point lies on the
line $\{x\in [0,1]^2: x_0+x_1=\rho\}$. Moreover, 
if $(x_0^*,x_1^*)$ is any particular point on that line, its domain
of attraction is the line
$\{(\beta x_0^*/\rho,\beta x_1^*/\rho),\, 0< \beta\leq 1\}$.
Indeed all straight lines passing through 
$\bzero$ with positive slope are invariant manifolds.

We now turn to the analysis of equilibria and their stability.

\section{Equilibria}

Whilst it seems impossible to obtain an explicit solution
to~(\ref{ODE1}) in all but the cases considered above, we can identify
all equilibria and classify them. We show that the system admits at most
one biologically relevant interior equilibrium, whose existence and
stability is determined by threshold conditions. We proceed in three
steps: (i) derive candidate equilibria, (ii) identify those lying in~$E$, 
and (iii) classify their stability.
We show that persistence is governed by an effective mortality $\mu^*$, 
lying between $\mu_0$ and $\mu_1$, which depends explicitly on the mutation 
and selection parameters.

Clearly $\bzero$ satisfies the equilibrium equations
\begin{align}
\lambda(q x_0 + (1-q)x_1)(1-m)&=\mu_0 x_0,
\label{EQ1}
\\
\lambda((1-q) x_0 + q x_1)(1-m)&=\mu_1 x_1,
\label{EQ2}
\end{align}
where $m=x_0+x_1$;
as already noted, it is the only boundary equilibrium.
So, assume that $x_0\neq 0$ and $x_1\neq 0$, and note
that $m\neq 1$ (since if $m=1$ then 
(\ref{EQ1}) and (\ref{EQ2}) would imply
$\mu_0 x_0=\mu_1 x_1=0$, and hence $x_0=x_1=0$, a contradiction).
We will see that, in all but two exceptional
cases, there is at least one non-zero solution---usually two solutions---to
(\ref{EQ1}) and~(\ref{EQ2}), at most one of which lies in $E$,
and that all can be exhibited explicitly.
We then classify the equilibria by way of the Jacobian
\begin{equation}
B(\bx) := \nabla F(\bx)=
\left(
\begin{matrix}
\lambda(q(1-2x_0)-x_1)-\mu_0 & \lambda ((1-q)(1-2x_1)-x_0) \\
\lambda((1-q)(1-2x_0)-x_1) & \lambda(q(1-2x_1)-x_0)-\mu_1  
\end{matrix}
\right).
\label{Jac}
\end{equation}
If $\bx^*=(x_0^*,x_1^*)$ is any specific equilibrium point, its
classification is determined by the signs of the eigenvalues of $B(\bx^*)$.
Furthermore, we may write $F(\bx)=B(\bx^*)(\bx-\bx^*)+h(\bx-\bx^*)$, where 
$$
h(\bx)
=
-\lambda
\left(
\begin{matrix}
(q x_0 + (1-q)x_1)(x_0+x_1) \\
((1-q) x_0 + qx_1)(x_0+x_1)
\end{matrix}
\right).
$$
Since $h$ satisfies $h(\bx)=\mathcal{O}(\|\bx\|^2)$ as $\|\bx\|\to 0$,
we may use Theorems~10.14 and~10.15 of Jordan and Smith \cite{JS07},
which apply to two-dimensional systems, and
connect the classification of $\bx^*$ with
its classification for the linearized system $\dot \bx = B(\bx^*)(\bx-\bx^*)$.
They say, respectively,
(i)~that $\bx^*$ is asymptotically stable when it is asymptotically stable
for the linearized system, and
(ii)~if the eigenvalues of $B(\bx^*)$ are different, non-zero, and
at least one has positive real part, then $\bx^*$ is unstable.
Furthermore, the Hartman-Grobman Theorem (Theorem~1.40 of~\cite{Chi2024})
applies when the eigenvalues of $B(\bx^*)$ are non-zero. It
says that the linearized system has the same phase portrait as the
original system in a sufficiently small neighbourhood of $\bx^*$.
It turns out that we get the {\em same\/} $h$ for all equilibria,
as a consequence of $h$ being quadratic (this is explained in
Appendix~\ref{appendix5}).
When one of the eigenvalues of $B(\bx^*)$ is equal to $0$,
which happens when $\det(B(\bx^*))=0$, the classification is
inconclusive.

In preparation for the general case,
it will be instructive to consider the earlier cases where we
have explicit formulae for the trajectories.

\subsection{The case $\mu_1=\mu_0=\mu$ (no selection)}

Assume that $q<1$. The equilibrium equations become
\begin{align}
\lambda(q x_0 + (1-q)x_1)(1-m)&=\mu x_0,
\label{F0xB}
\\
\lambda((1-q) x_0 + q x_1)(1-m)&=\mu x_1,
\label{F1xB}
\end{align}
and, on dividing (\ref{F0xB}) by (\ref{F1xB}), we notice that
for any non-zero solution $(x_0,x_1)$, the ratio
$r=x_1/x_0$ satisfies $r^2=1$. 
When $r=1$ we get 
$\bx=(\rho/2,\rho/2)$, where recall that $\rho=1-\mu/\lambda$. 
When $r=-1$, it is necessary that
$\lambda=\mu/(2q-1)$ and $q>1/2$, in which case all members of
the invariant manifold $\M:=\{\bx\in \R^2: x_0=-x_1\}$
are equilibria, but we exclude this from practical
considerations. As we shall see below, this turns out to be a
transitional case in the classification of $\bzero$.

The Jacobian of $F$ at $\bzero$,
$$
B(\bzero)=
\left(
\begin{matrix}
\lambda q-\mu & \lambda (1-q) \\
\lambda (1-q) &\lambda q-\mu  
\end{matrix}
\right),
$$
has eigenvalues $\alpha_1=\lambda (2q-1) - \mu$
and $\alpha_2=\lambda-\mu$, noting that $\alpha_1<\alpha_2$ since $q<1$.
So, we have the following classification.

\begin{itemize}
\item[(i)] If $\lambda<\mu$, then $\bzero$ is a stable node.
\item[(ii)] If $q<1/2$ and $\lambda>\mu$, or if
$q>1/2$ and $\mu<\lambda<\mu/(2q-1)$, then $\bzero$ is a saddle point.
\item[(iii)] If $q>1/2$ and $\lambda>\mu/(2q-1)$, then $\bzero$ is an unstable node.
\end{itemize}
The classification is inconclusive
at the transitional values, $\lambda=\mu/(2q-1)$ and $\lambda=\mu$
(where $\alpha_1=0$, respectively $\alpha_2=0$).
Numerical evidence suggests that
$\bzero$ is a saddle point when $\lambda=\mu$
and an unstable node when $\lambda=\mu/(2q-1)$ ($q>1/2$).
Refer to Figure~\ref{fig:muconstant}, which illustrates these classifications.

In cases (ii) and (iii), where $\lambda>\mu$, the non-zero equilibrium 
$\bx^*=(\rho/2,\rho/2)$ lies in~$E$, and
$$
B(\bx^*)=
\left(
\begin{matrix}
\mu q - (\lambda+\mu)/2, &\mu (1-q) - (\lambda-\mu)/2 \\
\mu (1-q) - (\lambda-\mu)/2  &\mu q - (\lambda+\mu)/2
\end{matrix}
\right).
$$
One may verify that the eigenvalues of $B(\bx^*)$ are
$\alpha_1=-2\mu(1-q)< \alpha_2=\mu-\lambda<0$, so
$\bx^*$ is a stable node.
Note that if $\lambda<\mu$, then $(\rho/2,\rho/2)$ is a 
saddle point in the third quadrant, but not biologically relevant.

Figure~\ref{fig:muconstant}
illustrates the various cases, all with $\mu=3$.
Trajectories are shown for various starting points (green).
The red dots indicate equilibria.
Plots (a)--(e) have $q>\frac12$ and increasing values of $\lambda$,
corresponding to
(a)~$\lambda<\mu$,
(b)~$\lambda=\mu$,
(c)~$\mu < \lambda < \mu/(2q-1)$,
(d)~$\lambda = \mu/(2q-1)$, and
(e)~$\lambda > \mu/(2q-1)$.
Plot~(f) has $q<1/2$ and $\lambda>\mu$. 
In cases (c)--(f), the stable equilibrium 
$\bx^*=(\rho/2,\rho/2)$ has $\rho/2$ equal to $0.2$, $1/3$,
$0.35$, and $0.35$, respectively. 
The two starting points that lie outside $E$ are included
to help elucidate the behaviour of the system near~$\bzero$.
For example, compare (e) and (f), both of which have $\lambda=10$.
In case (e) $\bzero$ is an unstable node, 
while in case (f) $\bzero$ is a saddle point.

\begin{figure}[htbp]
    \centering
\includegraphics[width=0.32\textwidth]{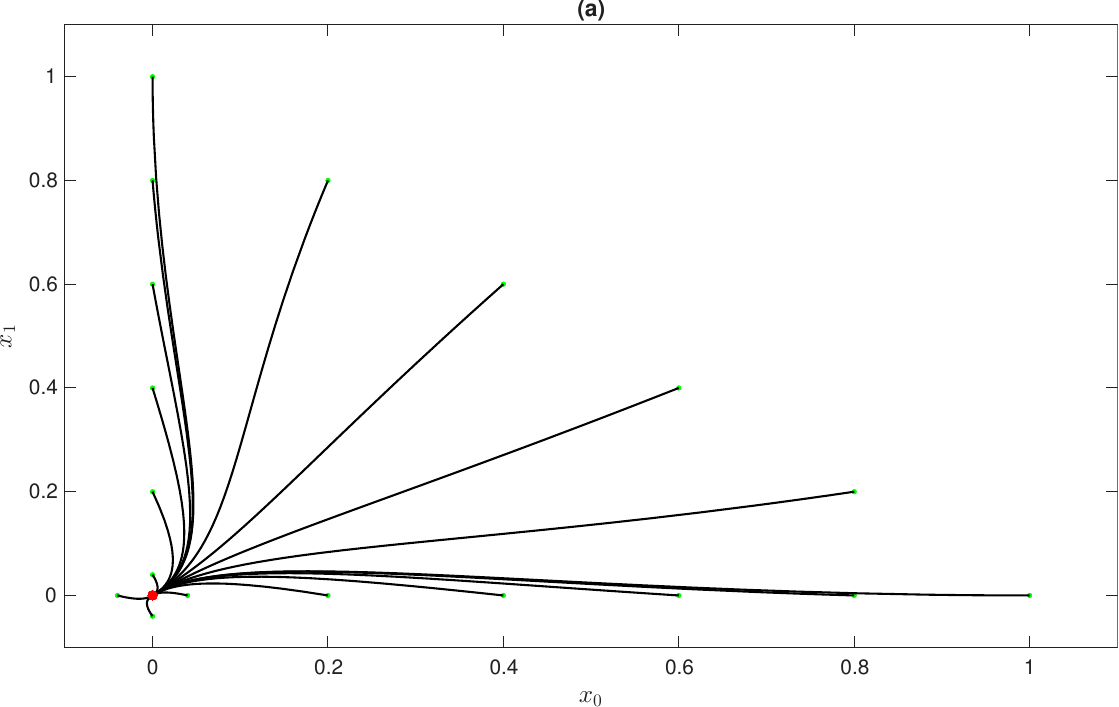}
\includegraphics[width=0.32\textwidth]{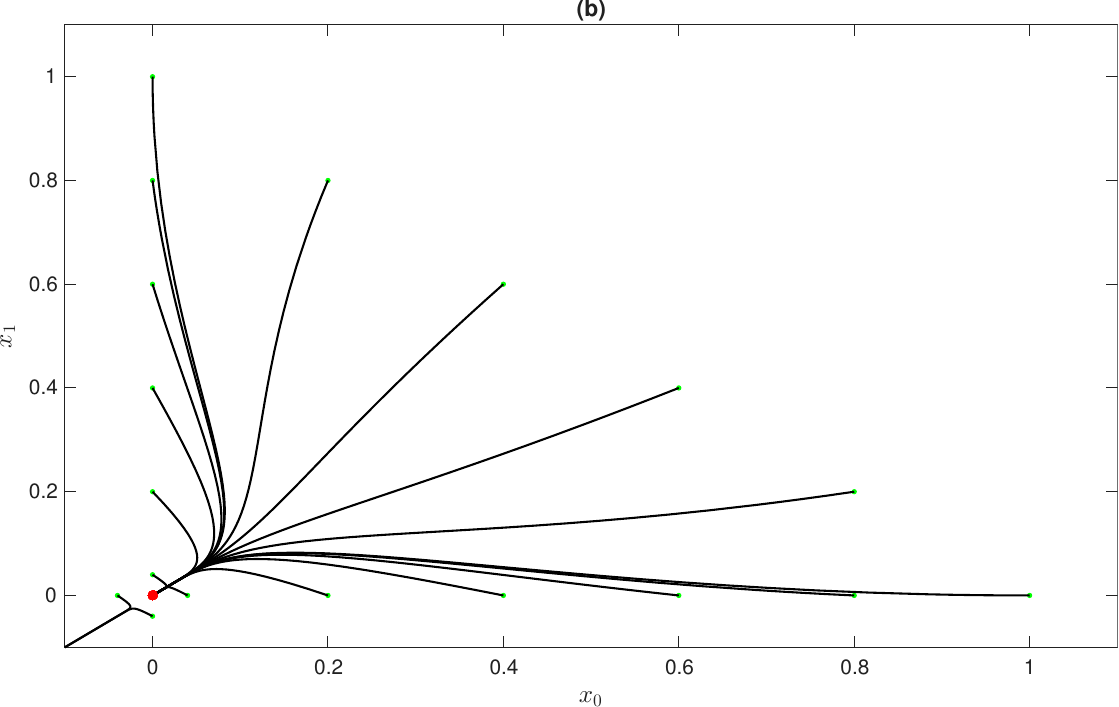}
\includegraphics[width=0.32\textwidth]{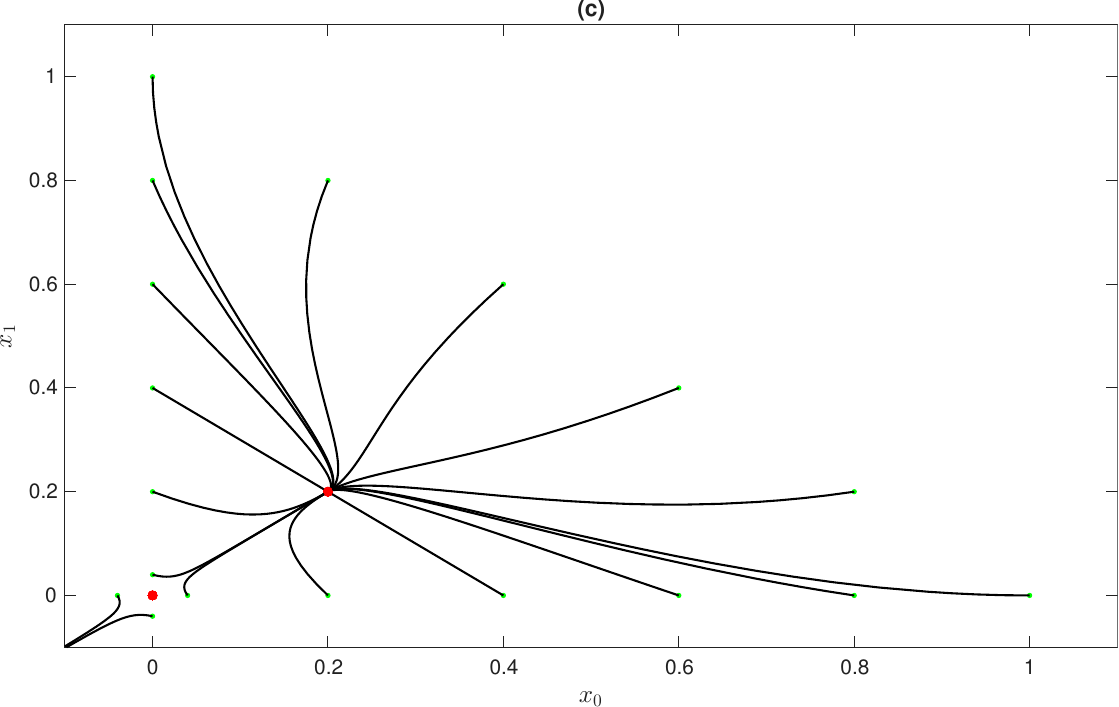}
\includegraphics[width=0.32\textwidth]{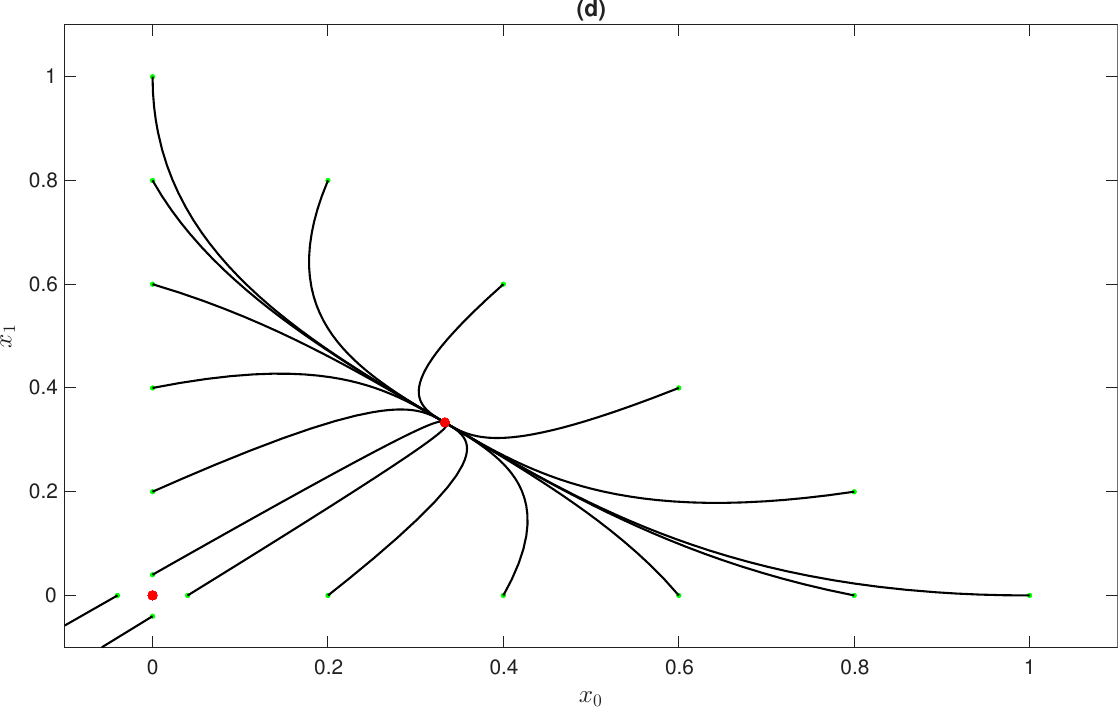}
\includegraphics[width=0.32\textwidth]{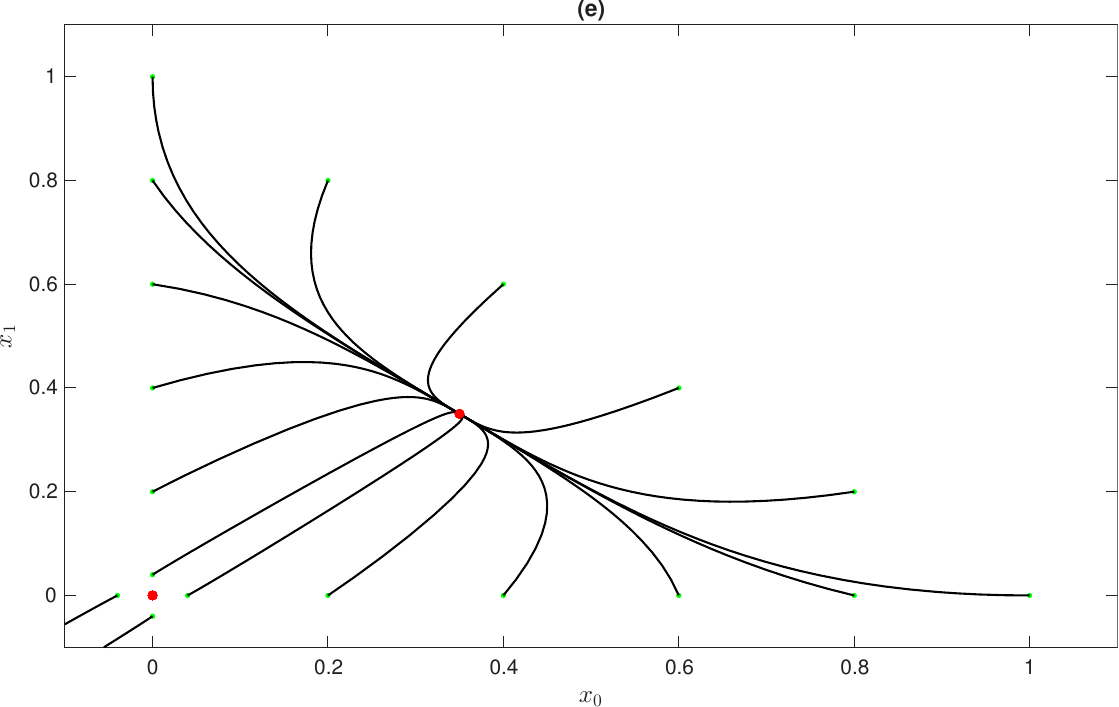}
\includegraphics[width=0.32\textwidth]{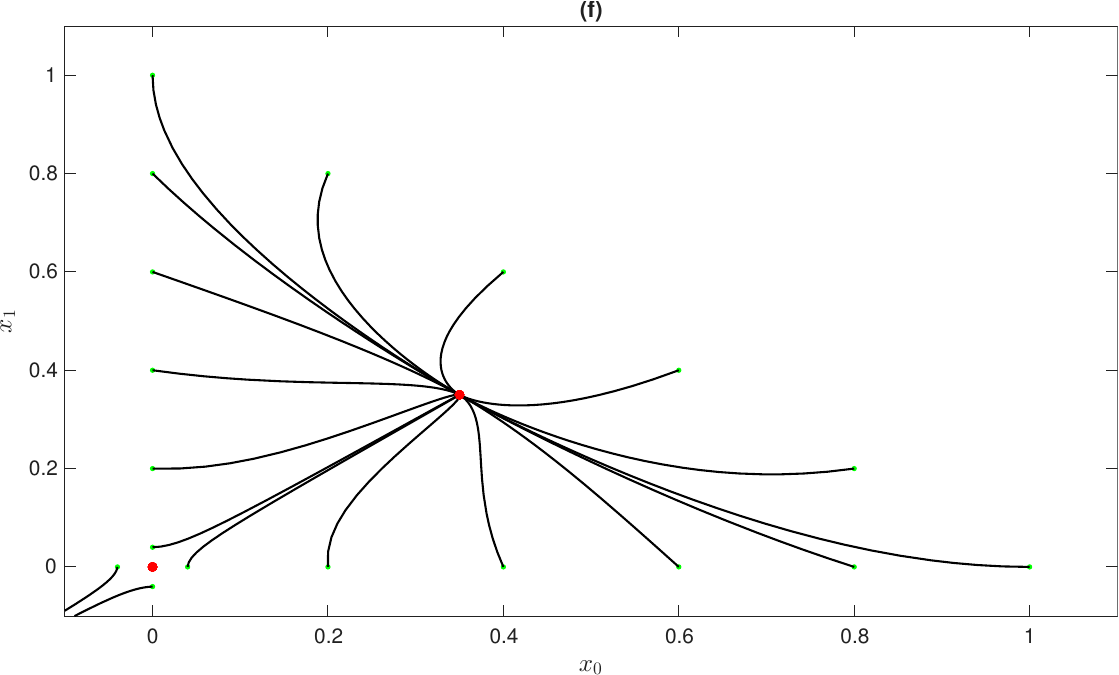}
    \caption{Trajectories of the deterministic model with $\mu_0=\mu_1=3$. 
Plots (a)--(e) have $q=\frac23$ and increasing values of $\lambda$:
(a) $\lambda=2$, (b) $\lambda=3$, (c) $\lambda=5$, (d) $\lambda=9$, and 
(e) $\lambda=10$. 
Plot~(f) has $q=\frac13$ and $\lambda=10$. 
The green dots indicate starting points.
The red dots indicate equilibria.}
    \label{fig:muconstant}
\end{figure}

\subsection{The case $q=1$ (no mutation)}

When $q=1$, (\ref{EQ1}) and (\ref{EQ2}) become
$\lambda (\rho_0-m) x_0 = 0$ and
$\lambda (\rho_1-m) x_1 = 0$, where recall that
$\rho_i=1-\mu_0/\lambda$, $i=0,1$. So, there are four cases to examine.

\begin{itemize}
\item[(1)]
$\bx^*=\bzero$:
$$
B(0,0)=
\left(
\begin{matrix}
\lambda\rho_0 &0 \\
0             & \lambda\rho_1
\end{matrix}
\right)
=
\left(
\begin{matrix}
\lambda-\mu_0 &0 \\
0             & \lambda-\mu_1
\end{matrix}
\right)
$$
has eigenvalues $\alpha_1= \lambda-\mu_1$ and $\alpha_2= \lambda-\mu_0$,
noting that $\alpha_1\leq \alpha_2$ since $\mu_1\geq \mu_0$.
So, $\bzero$ is a stable node if $\lambda<\mu_0$,
it is a saddle point if $\mu_0<\lambda<\mu_1$,
and it is an unstable node if $\lambda>\mu_1$.
Numerical evidence suggests that
$\bzero$ is a saddle point when $\lambda=\mu_0$
and an unstable node when $\lambda=\mu_1$. Refer to Figure~\ref{fig:qequals1},
which illustrates these classifications.

\item[(2)]
$\bx^*=(\rho_0,0)$, with $\rho_0>0$ ($\lambda>\mu_0$):
$$
B(\rho_0,0)=
\left(
\begin{matrix}
-\lambda\rho_0 & -\lambda \rho_0 \\
0              & \lambda(\rho_1-\rho_0)
\end{matrix}
\right)
=
\left(
\begin{matrix}
\mu_0-\lambda & \mu_0-\lambda \\
0             & \mu_0-\mu_1
\end{matrix}
\right)
$$
has eigenvalues $\alpha_1=\mu_0-\lambda$ ($<0$) and
$\alpha_2=\mu_0-\mu_1$ ($\leq 0$).
So, $(\rho_0,0)$ is a stable node if $\mu_0<\mu_1$.
(It cannot be a saddle point or an unstable node for the
linearized system.)
Refer to plots~(c), (d), and (e) in 
Figure~\ref{fig:qequals1}.

\item[(3)]
$\bx^*=(0,\rho_1)$ with $\rho_1>0$ ($\lambda>\mu_1$):
Similarly, the eigenvalues of $B(0,\rho_1)$
are $\alpha_1=\mu_1-\lambda$ ($<0$) and 
$\alpha_2=\mu_1-\mu_0$ ($\geq 0$). So, $(0,\rho_1)$ is an unstable node
if $\mu_1>\mu_0$.
(It cannot be a saddle point or a stable node for the
linearized system.) Refer to plot~(e) in 
Figure~\ref{fig:qequals1}.

\item[(4)]
$x_0^*,x_1^*>0$: 
We must
have $x_0^*+x_1^*=\rho_0=\rho_1$, so it is necessary that $\mu_0=\mu_1$
($=\mu$), $\lambda>\mu$, and $x_0^*+x_1^*=\rho$, where $\rho=1-\mu/\lambda$.
Then,
$$
B(x_0^*,x_1^*)
=
\left(
\begin{matrix}
-\lambda x_0^* & -\lambda x_0^* \\
-\lambda x_1^* & -\lambda x_1^* 
\end{matrix}
\right),
$$
has eigenvalues $\alpha_1=0$ and 
$\alpha_2=-\lambda(x_0^*+x_1^*)=-\lambda\rho=\mu-\lambda<0$.
Whilst we have a zero eigenvalue, we have already seen
that the line $\{\bx\in [0,1]^2: x_0+x_1=\rho\}$ is attracting.
\end{itemize}

\begin{figure}[htbp]
   \centering
   \includegraphics[width=0.32\textwidth]{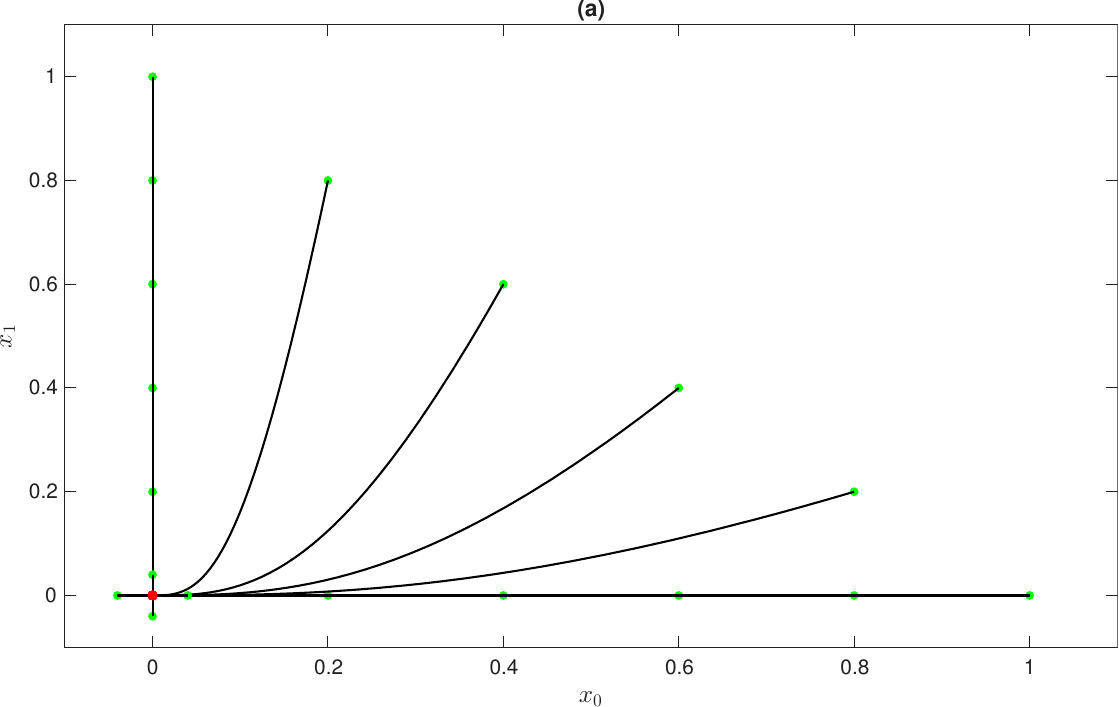}
   \includegraphics[width=0.32\textwidth]{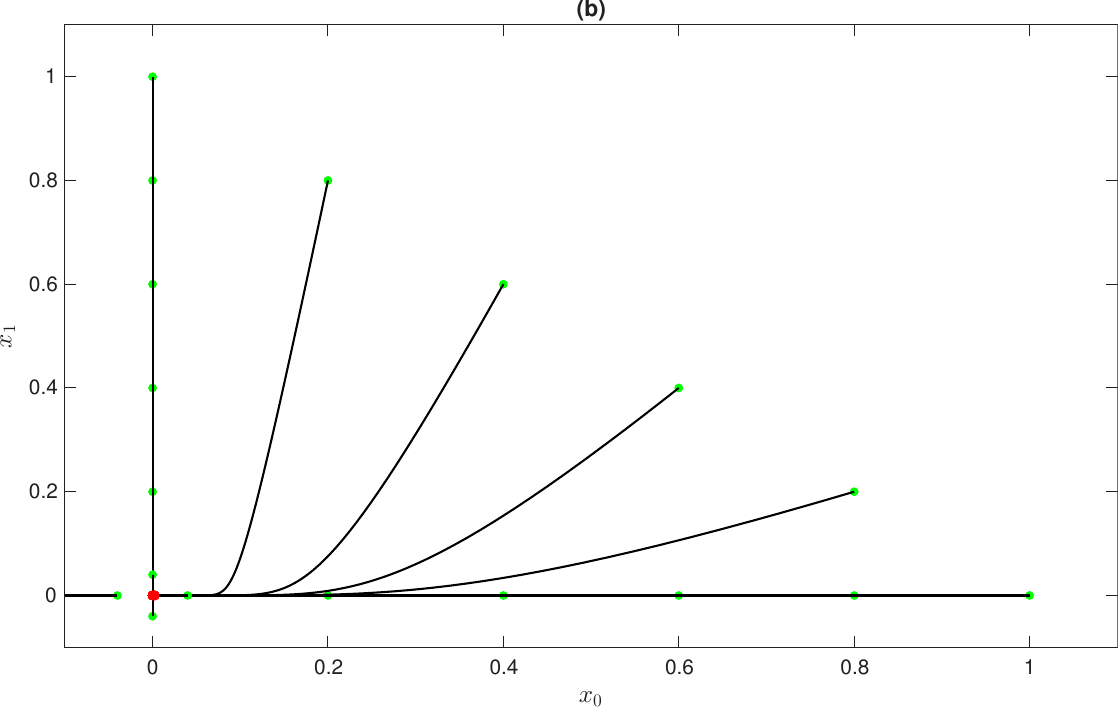}
   \includegraphics[width=0.32\textwidth]{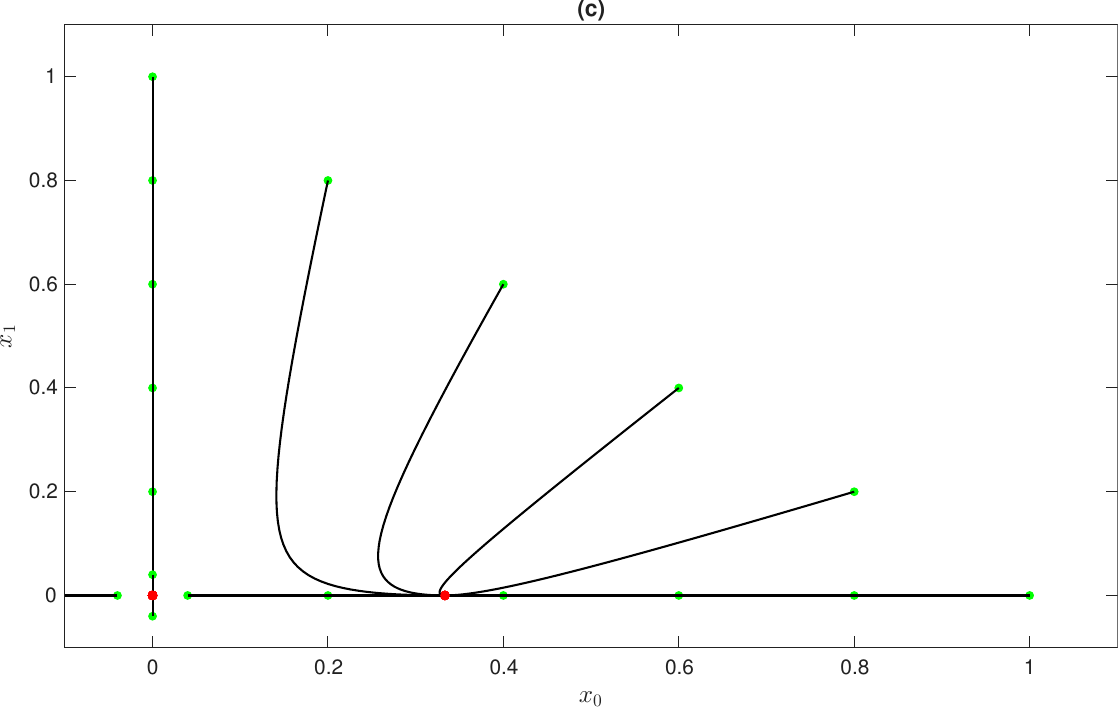}
   \includegraphics[width=0.32\textwidth]{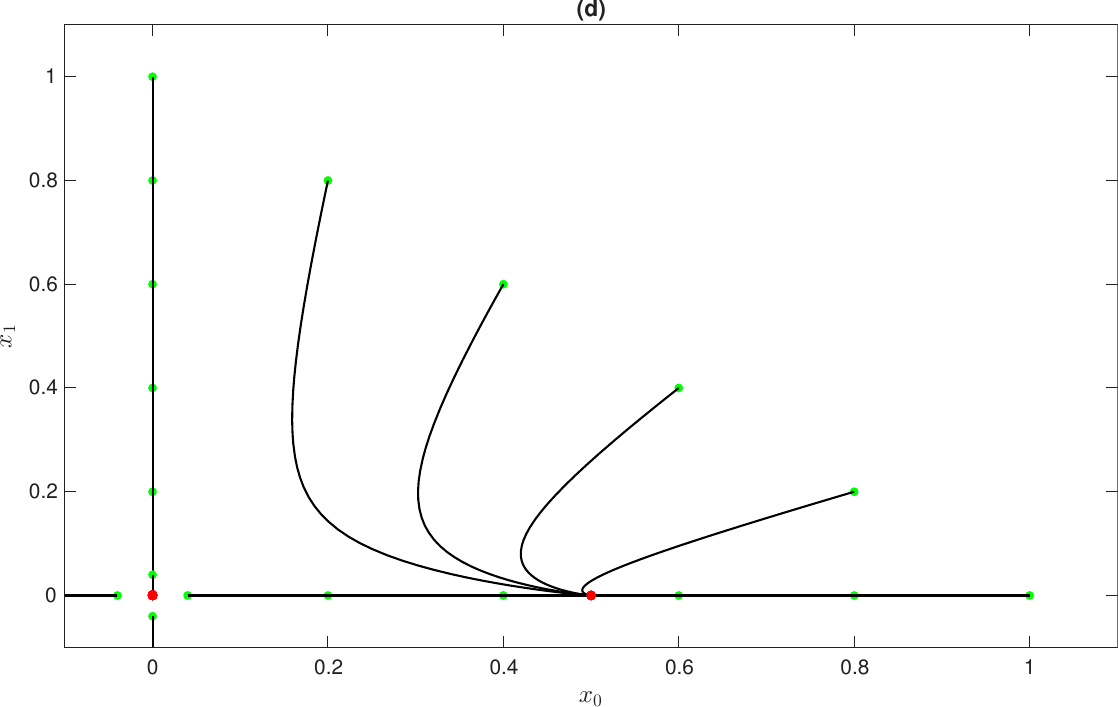}
   \includegraphics[width=0.32\textwidth]{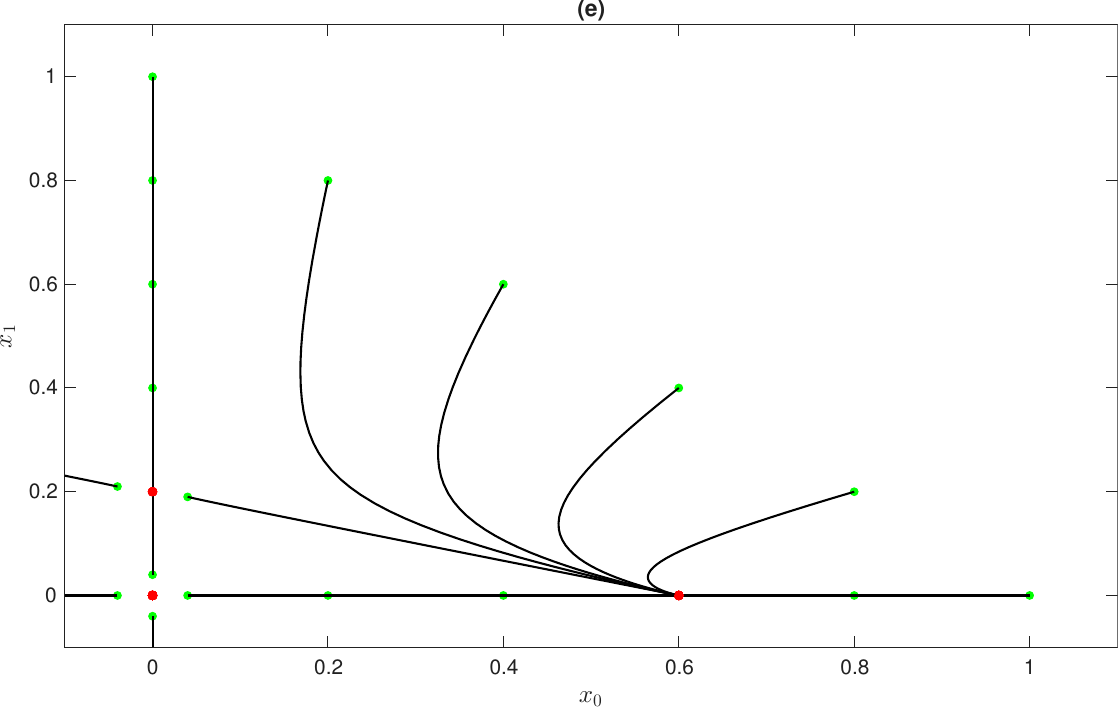}
   \includegraphics[width=0.32\textwidth]{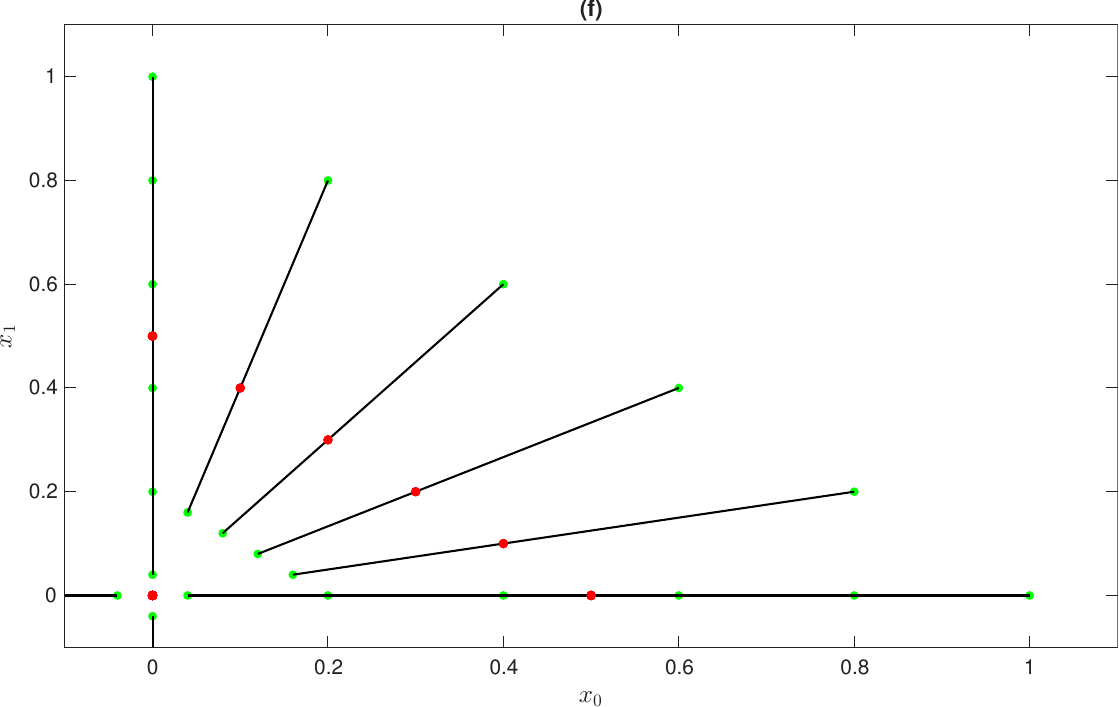}
   \caption{Trajectories of the deterministic model with $q=1$. Plots 
     (a)--(e) have $\mu_0=2$ and $\mu_1=4$, and increasing values 
     of $\lambda$: (a) $\lambda=1$, (b) $\lambda=2$, (c) $\lambda=3$, 
     (d) $\lambda=5$, and (e) $\lambda=5$. Plot (f) has 
     $\mu_0=\mu_1=2$ and $\lambda=4$.  The green dots indicate 
     starting points. The red dots indicate equilibria.}
  \label{fig:qequals1}
\end{figure}

Figure~\ref{fig:qequals1}
illustrates the various cases.
Trajectories are shown for various starting points (green).
The red dots indicate equilibria.
Plots (a)--(e) have $\mu_0=2$ and $\mu_1=4$, 
and increasing values of $\lambda$, corresponding to
(a) $\lambda<\mu_0$,
(b) $\lambda=\mu_0$,
(c) $\mu_0 < \lambda < \mu_1$,
(d) $\lambda = \mu_1$, and
(e) $\lambda > \mu_1$.
Plot (f) has $\mu_0=\mu_1=\mu$ and $\lambda>\mu$.
A few starting points that lie outside $E$ are included
to help elucidate the behaviour of the system near equilibria
(unstable node/saddle point/unstable node).
Notice the appearance in plot~(e) of the equilibrium $(0,\rho_1)$
(unstable), $\rho_1=0.2$. Notice also in plot~(f) the line of equilibria
$\{\bx\in [0,1]^2: x_0+x_1=\rho\}$ ($\rho=1/2$).

\subsection{The generic case $\mu_1>\mu_0$ and $q<1$}

%
To obtain non-zero equilibria, divide (\ref{EQ1}) by~(\ref{EQ2}) 
and rearrange to get
$$
(1-q)x_1^2 +q(1-s)x_0x_1 -(1-q)s x_0^2=0,
$$
where $s=\mu_0/\mu_1$. So, $r:=x_1/x_0$ satisfies 
\begin{equation}
f(r)=0, \quad \text{where }
f(r)=(1-q) r^2 +q(1-s)r -(1-q)s.
\label{eq:r}
\end{equation}
This has roots
\begin{equation}
r^{\pm} = \frac{-q(1-s) \pm \sqrt{\delta}}{2(1-q)},
\label{rpm}
\end{equation}
where
\begin{equation}
\delta= q^2 (1-s)^2 + 4(1-q)^2 s
= q^2 (1+s)^2 - 4(2q-1)s \ (>0).
\label{delta}
\end{equation}
Since $1-q$ and $(1-q)s$ are both strictly positive, the graph of $f$ is an 
upward opening parabola with negative intercept, and so  $r^+>0$ and
$r^-<0$. 

It is clear that the choice $q=1/2$ (when an offspring's trait is 
independent of its parent's) leads to considerable simplification,
because from (\ref{EQ1}) and (\ref{EQ2}) we get
$\mu_0 x_0=\mu_1 x_1$ (corresponding to the
positive root $r^+=s=\mu_0/\mu_1$), and hence $m=(1+s)x_0$. 
On substituting this into
$\lambda m(1-m)= \mu_0 x_0+\mu_1 x_1$ (which follows on summing (\ref{EQ1}) 
and (\ref{EQ2})), we obtain an explicit expression 
for the non-zero equilibrium $\bx^*=(x_0^*,x_1^*)$. It has a pleasingly simple
form. Let $\theta^*=1/(1+s)$. Then,
\begin{equation}
x_0^*
=\theta^* m^*
=\frac{\mu_1}{\mu_0+\mu_1} m^*,
\quad
x_1^*
=(1-\theta^*) m^*
=\frac{\mu_0}{\mu_0+\mu_1} m^*,
\quad \text{and} \quad
m^*=1-\frac{\mu^*}{\lambda},
\label{EQmuconstant}
\end{equation}
where $\mu^*=2\mu_0\mu_1/(\mu_0+\mu_1)$, being the
{\em harmonic mean\/} of $\mu_0$ and $\mu_1$.
This anticipates the general result, where $\mu^*$ is identified,
and given explicitly in terms of the positive root $r^+$.
If $\lambda=\mu^*$, then $\bzero$ is the {\em only\/} equilibrium point.
If $\lambda<\mu^*$, then the equilibrium would lie in the third
quadrant, so we exclude it from practical consideration.
If $\lambda>\mu^*$, then $\bx^*=(x_0^*,x_1^*)\in E$; in this case
$m^*$ can be interpreted as the carrying capacity of the population,
which is apportioned to each trait in accordance with the relative
per-capita death rates.

Let us classify the equilibria. We see that
$$
B(x_0,x_1)=
\left(
\begin{matrix}
a-\mu_0 & a \\
a & a-\mu_1  
\end{matrix}
\right),
\qquad \text{where } a=\lambda(\tfrac12-(x_0+x_1)),
$$
noting that $B$ depends on $x_0$ and $x_1$ only through their sum.
Also $\tr$, the trace of $B$, is equal to $2a-(\mu_0+\mu_1)$, 
and $d:=\det(B)=\mu_0\mu_1 - (\mu_0+\mu_1)a$.
So, at $\bx=\bzero$, we have $a=\tfrac12 \lambda$. Thus, 
$$
\tr=\lambda-(\mu_0+\mu_1), 
\quad\text{and}\quad
d= \mu_0\mu_1 - \tfrac12 \lambda (\mu_0+\mu_1)
=\tfrac12 (\mu_0+\mu_1)(\mu^*-\lambda).
$$
The eigenvalues of $B(\bzero)$ are real because
the discriminant $\Delta:=\tr^2-4d=\lambda^2 + (\mu_1-\mu_0)^2$ is strictly
positive.
It follows that $\bzero$ is a stable node if $\tr<0$ and $d>0$,
that is, if $\lambda<\mu^*$, 
since $\mu^*\leq \mu_0+\mu_1$ (because 
harmonic mean $\leq$ arithmetic mean).
It is a saddle point if $d<0$,
that is $\lambda>\mu^*$. It cannot be an unstable node, for this 
would require $\tr>0$ and $d>0$. 
The classification of $\bzero$ is inconclusive when $\lambda=\mu^*$,
but numerical evidence suggests that
$\bzero$ is a saddle point in this case.

At $\bx=\bx^*$, with $\lambda>\mu^*$,
we get $a=\lambda(\tfrac12-m^*)=\mu^*-\frac12 \lambda$. So, 
$$
\tr=2\mu^*-(\mu_0+\mu_1) - \lambda
= -\frac{(\mu_0-\mu_1)^2}{\mu_0+\mu_1} - \lambda \ (<0 \text{ always}),
$$
and
$$
d=\mu_0\mu_1 - (\mu_0+\mu_1)(\mu^*-\tfrac12 \lambda)
=\tfrac12 (\mu_0+\mu_1)(\lambda-\mu^*)>0.
$$
It follows that $\bx^*$ is a stable node. 

Figure~\ref{fig:qequalshalf} illustrates the various cases.
Trajectories are shown for various starting points (green).
The red dots indicate equilibria.
We have $\mu_0=2$ and $\mu_1=4$, 
and increasing values of $\lambda$, corresponding to
(a) $\lambda<\mu^*$,
(b) $\lambda=\mu^*$,
(c) $\lambda>\mu^*$.
Here $\mu^*=\frac83$, $m^*=\frac12$, and so $x_0^*=\frac23 m^*=\frac13$, and 
$x_1^*=\frac13 m^*=\frac16$.
Notice that $\bzero$ is a saddle point when $\lambda=\mu^*$.

\begin{figure}[htbp]
    \centering
\includegraphics[width=0.32\textwidth]{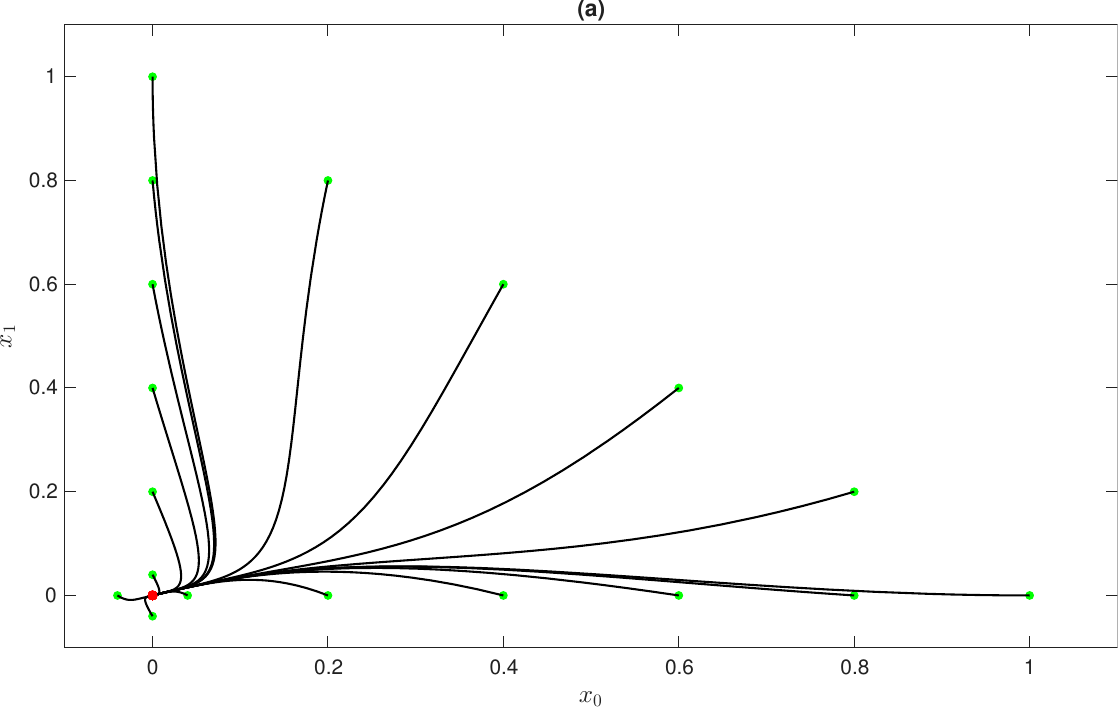}
\includegraphics[width=0.32\textwidth]{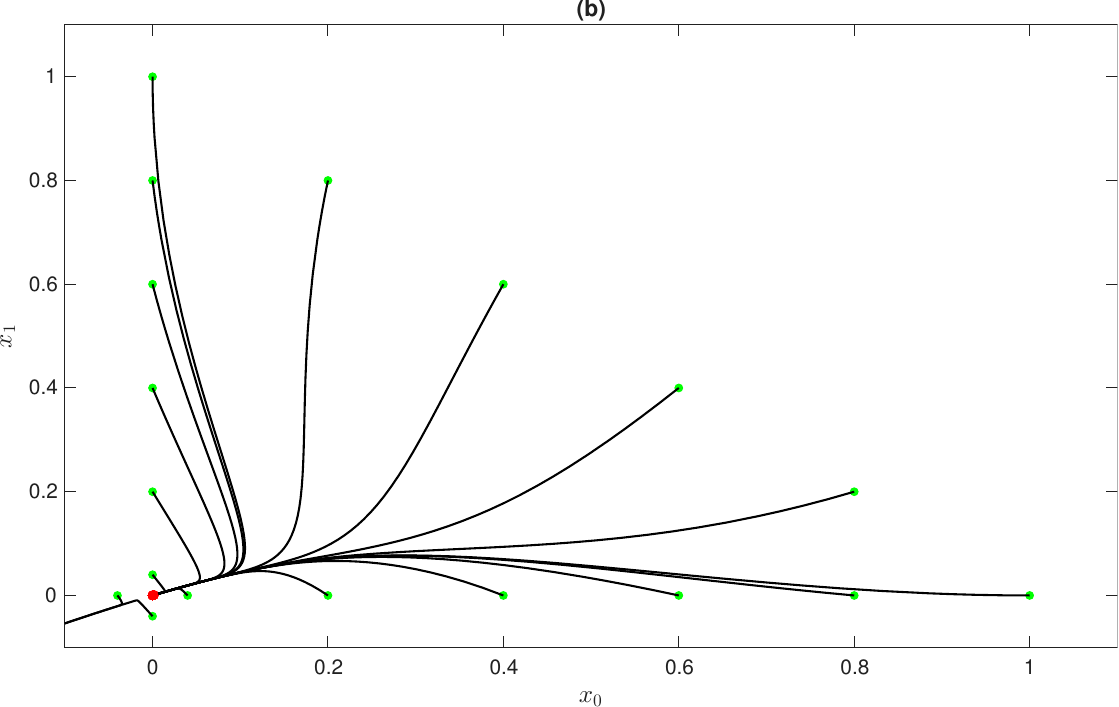}
\includegraphics[width=0.32\textwidth]{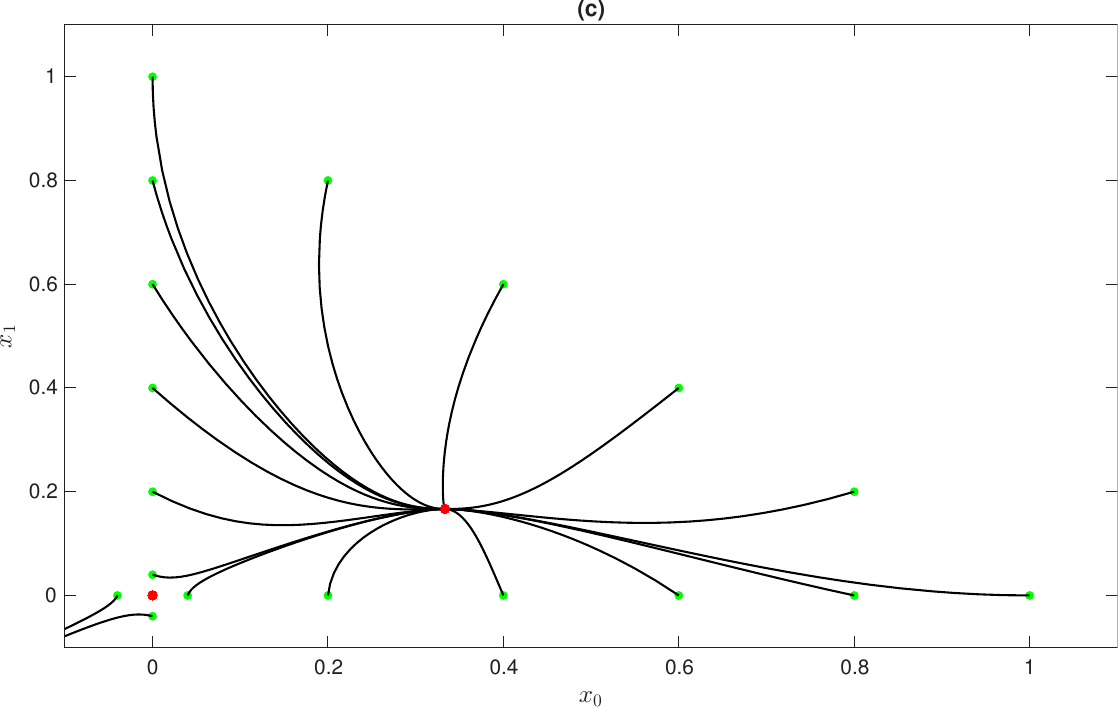}
    \caption{Trajectories of the deterministic model with $\mu_0=2$ 
and $\mu_1=4$,
and $q=\frac12$, and increasing values of $\lambda$:
(a) $\lambda=2$, (b) $\lambda=\mu^*=\frac83$, and (c)
$\lambda=\frac{16}{3}$.
The green dots indicate starting points.
The red dots indicate equilibria.}
    \label{fig:qequalshalf}
\end{figure}

Now assume $q\neq 1/2$.
For either root $r\in \{r^+,r^-\}$, the sum $m:=x_0+x_1$
is determined uniquely from either of (\ref{EQ1}) or (\ref{EQ2}):
$m=m(r)=1-\mu(r)/\lambda$, where
\begin{equation}
\mu(r) =\frac{\mu_0}{q+(1-q)r} =\frac{\mu_1 r}{1-q+qr}.
\label{muofr}
\end{equation}
Since $q\neq 1/2$, we may write
\begin{equation}
\mu(r^{\pm})
=\frac{2\mu_0}{q(1+s) \pm \sqrt{\delta}}
=\frac{\mu_0(-q(1+s) \pm \sqrt{\delta})}{2(1-2q)s}
=\frac{-q(\mu_0+\mu_1) \pm \mu_1\sqrt{\delta}}{2(1-2q)}.
\label{mustar}
\end{equation}
Also since $x_1=r x_0$, we have $x_0(1+r)=m$, and so 
$$
x_0(r) = \theta(r) m(r),
\qquad
x_1(r) = (1-\theta(r)) m(r),
$$
where $\theta(r)=1/(1+r)$.
This specifies two {\em distinct\/} equilibria, one for each root
$r\in \{r^+,r^-\}$. Notice that if $r=r^+$, then
$x_0$ and $x_1$ will be either both positive or both negative, 
while if $r=r^-$, then $x_0$ and~$x_1$ will have opposite signs,
the precise combination of signs being determined by the value of~$\lambda$. 
The only biologically relevant case has $r=r^+$
and $\lambda > \mu(r^+)$, implying that $\bx(r^+)\in E$. Let
us write $\mu^*=\mu(r^+)$ and $\theta^*=1/(1+r^+)$,
so that $\bx^*=(x_0^*,x_1^*)$ is given by
$$
x_0^* = \theta^* m^*, \quad x_1^* = (1-\theta^*) m^*,
\quad \text{and} \quad
m^*=1-\frac{\mu^*}{\lambda}.
$$
So, we may interpret $m^*$ as being the carrying capacity of the population,
which is apportioned to each trait according to the value of $r^+$.
Notice that on setting
$r=\mu_0/\mu_1$ ($=r^+$) in (\ref{muofr}) we get (\ref{EQmuconstant})
precisely. 
Note also that $f(1)=1-s>0$ (refer to (\ref{eq:r})) implies that
$r^+<1$, and hence that $x_1^*<x_0^*$,
reflecting the selective disadvantage of trait~1 ($\mu_1>\mu_0$).
We may think of $\mu^*$ as being an {\em effective mortality rate\/}. As we
shall see, it governs whether the population can be sustained.
It is interesting to note that $\mu_0<\mu^*<\mu_1$.
(This follows from (\ref{muofr}) with $r=r^+$, 
noting that $r^+<1$. First, $q+(1-q)r^+<q+1-q=1$, so $\mu^*>\mu_0$.
Next, $(1-q)(1-r^+)>0$ implies that $r^+<1-q+qr^+$, and so
$\mu^*<\mu_1$.)

Now set $\lambda_1=\mu(r^+)$ and $\lambda_2=\mu(r^-)$ for the two
crossover points (the values of $\lambda$ where $x_0$ and $x_1$ change signs).
We will see that $\lambda_1$ ($=\mu^*$) is indeed the critical value of 
$\lambda$ for stability of $\bzero$, but we will also need $\lambda_2$ for a
finer classification in the case of $q>1/2$.
In order to make it easier for us to 
distinguish the cases $q<1/2$ and $q>1/2$, let us write
$$
\lambda_1
=\frac{-q(\mu_0+\mu_1)+\mu_1\sqrt{\delta}}{2(1-2q)}
=\frac{q(\mu_0+\mu_1)-\mu_1\sqrt{\delta}}{2(2q-1)},
$$
and
$$
\lambda_2
=\frac{-q(\mu_0+\mu_1)-\mu_1\sqrt{\delta}}{2(1-2q)}
=\frac{q(\mu_0+\mu_1)+\mu_1\sqrt{\delta}}{2(2q-1)},
$$
and, importantly, note that $\lambda_1$ and $\lambda_2$ are
the roots of the quadratic equation
$$
g(\lambda)=0,
\quad \text{where }
g(\lambda)=(2q-1)\lambda^2 - (\mu_0 + \mu_1)q\lambda + \mu_0\mu_1.
$$
Observe that the graph of $g$ is a parabola with positive intercept;
it is downward opening if $q<1/2$, 
implying that $\lambda_2<0<\lambda_1$,
and upward opening if $q>1/2$, 
implying that $0<\lambda_1<\lambda_2$.

We can now proceed to classify the equilibria. At $\bzero$, we have
$$
B(\bzero)=
\left(
\begin{matrix}
\lambda q-\mu_0 & \lambda (1-q) \\
\lambda (1-q) &\lambda q-\mu_1  
\end{matrix}
\right).
$$
So, the trace of $B(\bzero)$ is equal to $2\lambda q -(\mu_0+\mu_1)$, 
and the determinant of $B(\bzero)$ is equal to $g(\lambda)$ (above).
And, since the discriminant 
$\Delta:=\tr^2-4d=4(1-q)^2\lambda^2 + (\mu_1-\mu_0)^2$ is strictly
positive, the eigenvalues of $B(\bzero)$ are real.
We get a different classification depending on the value of $q$
(which determines the shape of the graph of $g$).

Set $J=(\mu_0+\mu_1)/(2q)$. If $q<1/2$, then
(i) $\bzero$ is a stable node 
if $\lambda < J$ and $\lambda<\lambda_1$, 
(ii) it is a saddle point if $\lambda>\lambda_1$, and
(iii) it is an unstable node if
$\lambda > J$ and $\lambda<\lambda_2$ (which cannot occur).
If $q>1/2$, then
(i) $\bzero$ is a stable node 
if $\lambda < J$, and $\lambda<\lambda_1$
or $\lambda>\lambda_2$,
(ii) it is a saddle point if $\lambda_1<\lambda<\lambda_2$, and
(iii) it is an unstable node
if $\lambda > J$, and $\lambda<\lambda_1$ or $\lambda>\lambda_2$.
It can be shown that if $q<1/2$, then $\lambda_1<J$,
while if $q>1/2$, then $\lambda_1<J<\lambda_2$
(see Appendix~\ref{appendix6}),
so the classification becomes:
\begin{itemize}
\item[(i)] If $\lambda<\lambda_1$, then $\bzero$ is a stable node.
\item[(ii)] If $q<1/2$ and $\lambda>\lambda_1$, or if
$q>1/2$ and $\lambda_1<\lambda<\lambda_2$, then $\bzero$ is a saddle point.
\item[(iii)] If $q>1/2$ and $\lambda>\lambda_2$, 
then $\bzero$ is an unstable node.
\end{itemize}
Note that this classification encapsulates the case 
$\mu_0=\mu_1=\mu$ considered earlier, because on setting $s=1$ we get
$\sqrt{\delta}=2(1-q)$, and hence $\lambda_1=\mu$ and
$\lambda_2=\mu/(2q-1)$.

When $\bzero$ is unstable, small populations tend to grow regardless of
their trait composition. In contrast, when it is a saddle point, growth
depends on composition: some initial trait compositions lead to
persistence, while others lead to extinction. Thus, the saddle regime
reflects a threshold phenomenon in which successful establishment
requires not only favourable growth conditions, but also a favourable
balance of traits.

Next we classify the non-zero equilibria. At $\bx^*$ the Jacobian is
$$
B(\bx^*)=
\left(
\begin{matrix}
\lambda(q(1-2x_0^*)-x_1^*)-\mu_0 & \lambda ((1-q)(1-2x_1^*)-x_0^*) \\
\lambda((1-q)(1-2x_0^*)-x_1^*) & \lambda(q(1-2x_1^*)-x_0^*)-\mu_1  
\end{matrix}
\right).
$$
One finds that that its trace is
$$
\tr 
= \lambda(2q(1-m^*) -m^*) - (\mu_0+\mu_1)
= \mu^* + 2q(\mu^*- J) - \lambda.
$$
Also, its determinant is $d=\mu_1\sqrt{\delta}(\lambda-\mu^*)$ (the
proof, which is not completely straightforward, is given in
Appendix~\ref{appendix4}). Note that these formulae are valid when
$q=1/2$. Furthermore, the discriminant is a quadratic function of the
form $\Delta(\lambda)=\lambda^2 + b\lambda + c$, which is easily shown to
be strictly positive for $\lambda>0$.
In the biologically relevant case, $r=r^+$ and $\lambda>\lambda_1$, we 
have that $\tr = \lambda_1 +
2q(\lambda_1- J)-\lambda <\lambda_1-\lambda$, and
$d=\mu_1\sqrt{\delta}(\lambda-\lambda_1)$. So, the classification of
$\bx^*$ is very simple:
$\bx^*$ is a saddle point if $\lambda<\lambda_1$, and a stable node if
$\lambda>\lambda_1$ (it cannot be an unstable node).

\section{Quasi stationarity}

While the deterministic approximation captures overall trends and
long-term equilibria, the stochastic model exhibits persistent
fluctuations around these states. In regimes where the population
persists, the process typically remains close to a stable equilibrium
for long periods before eventual extinction. Understanding this
quasi-equilibrium, or quasi-stationary, 
behaviour is therefore essential for describing the
observable dynamics of the system.

The law of large numbers (Theorem~\ref{LLN} of Appendix~\ref{appendix2})
gives a precise statement about convergence of sample paths of our model
to the corresponding deterministic trajectories. It does not say
anything about the random fluctuations about those paths, fluctuations
which are apparent in all of the simulations illustrated in 
Figure~\ref{fig:range}. The central limit law
(Theorem~\ref{CLT} of Appendix~\ref{appendix3}) shows that these
fluctuations follow a diffusion process whose parameters can be
determined from the model parameters. However, 
in cases where the process reaches a quasi equilibrium, this regime
is reached rapidly (a behaviour that can be seen most clearly in
Figure~\ref{fig:range}b), the {\OU} process described in
Corollary~\ref{CLTOU} of Appendix~\ref{appendix3} will be of greater use
in describing this behaviour, especially as we have explicit expressions
available for its stationary covariance matrix. In particular, when
$N$ is large,
$\bX=\bn/N$ has an approximate bivariate normal distribution with
$\Var(X_0)$, $\Var(X_1)$, and $\Cov(X_0,X_1)$, approximated by
$\Var_0/N$, $\Var_1/N$, and $\Cov_{01}/N$, respectively, where 
$\Var_0$, $\Var_1$, and $\Cov_{01}$, are given by 
(\ref{Var0}), (\ref{Var1}), and (\ref{Cov}).

Figure~\ref{fig:contourplot} shows the simulation of
$\bX(t)=(X_0(t),X_1(t))$ 
corresponding to the sample paths $X_0$ and $X_1$ depicted
individually in Figure~\ref{fig:range}b, together with a contour
plot of the approximating bivariate normal distribution.
We see that once the process reaches quasi equilibrium, the
approximation is faithful (the outer contour marks $98$ percent of the 
probability mass).

In the no-selection case, $\mu_0=\mu_1=\mu$, with $\lambda>\mu$ for
a positive equilibrium $\bx^*=(\rho/2,\rho/2)$
($\rho=1-\mu/\lambda$), the covariance matrix can be evaluated explicitly:
$$
\Var_0=\Var_1 =\frac{\lambda - \mu(2q-1)}{8\lambda(1-q)} \ (>0),
\qquad 
\Cov_{01} = -\frac{\lambda - \mu(3 - 2q)}{8\lambda(1-q)}.
$$
We see that $\Cov_{01}<0$, $\Cov_{01}=0$, or $\Cov_{01}>0$, according to
whether $\lambda>(3-2q)\mu$, $\lambda=(3-2q)\mu$, or
$\lambda<(3-2q)\mu$ (note that $3-2q>1$). 
These conditions imply, respectively, that when $N$ is large,
the trait numbers, $n_0$ and $n_1$,
are asymptotically negatively correlated, 
uncorrelated, or positively correlated.
We can make some headway in the general case noticing that
$\eta$ in (\ref{eta}) is equal to
$\tr(B(\bx^*))\det(B(\bx^*)$, which is strictly negative since $\bx^*$
is a stable node. It follows that
$$
\sgn\{\Cov_{01}\}=-\sgn\{b_{11}b_{12}\mu_1x_1^*+b_{21}b_{22}\mu_0x_0^*\},
$$
which may simplify identifying the sign change. Note
that this sign change can be understood by considering the
fluctuations of the total population and the imbalance between traits.
Set $M = X_0 + X_1$ and $D = X_1 - X_0$ 
(following~Section~\ref{section:noselection}). Then,
\begin{equation}
\mathrm{Cov}(X_0,X_1) = \tfrac{1}{4}\bigl(\mathrm{Var}(M)
- \mathrm{Var}(D)\bigr).
\label{CC}
\end{equation}
Thus, the sign of the covariance reflects a competition between two  
sources of variability: fluctuations in total
population size and fluctuations in the imbalance between traits. When
variation in the total population dominates, the traits tend to co-vary
positively; when fluctuations in their difference dominate, the traits
become negatively correlated. This provides a simple interpretation of
stochastic variability in terms of competing ecological and
compositional fluctuations.

\begin{figure}[htbp]
    \centering
    \includegraphics[width=0.8\textwidth]{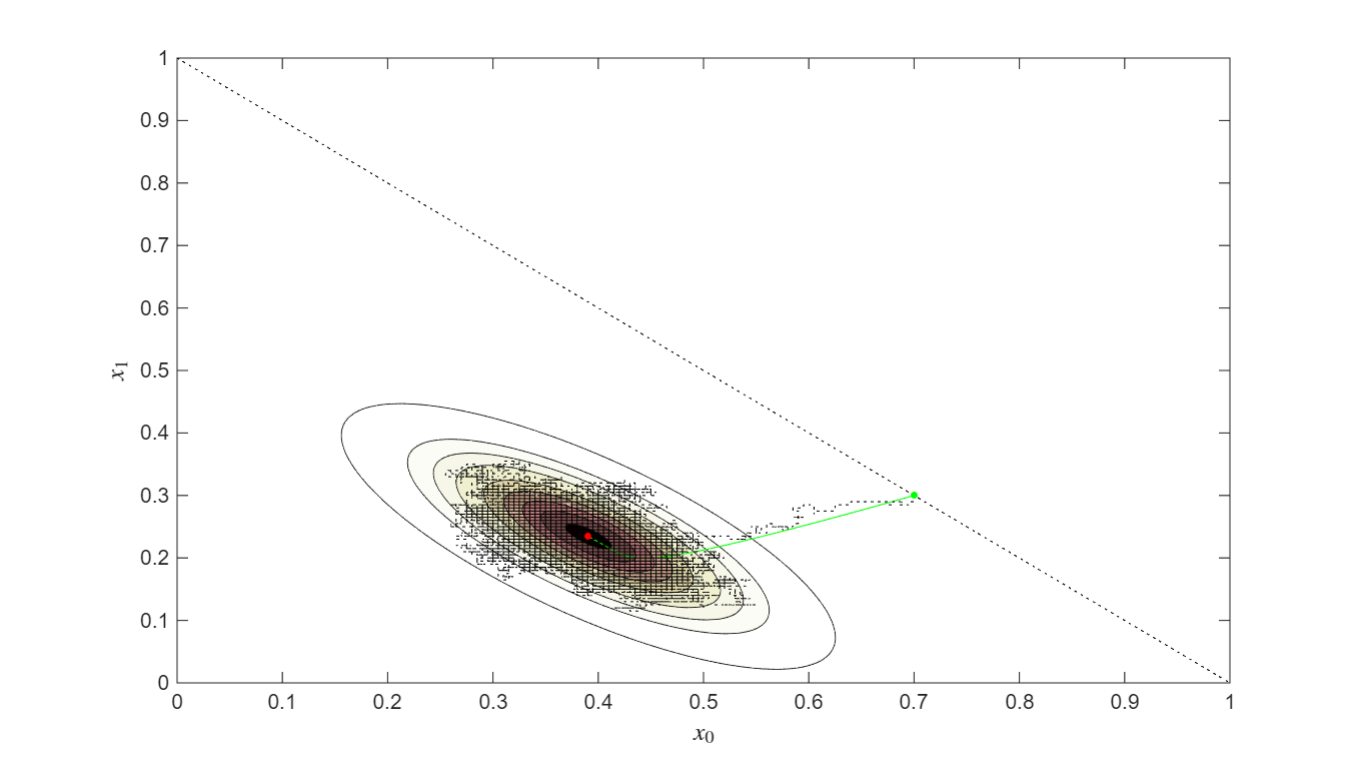}
    \caption{Simulation of the process $\bX(t)=(X_1(t),X_2(t))$ 
with $N=200$, $\lambda = 2.5$, 
$q = 0.9$, $\mu_0 = 0.9$, $\mu_1 = 1$ over the time
interval $(0,60)$. The approximating deterministic trajectory
is also shown, together with a contour plot of the approximating 
discretised bivariate normal distribution.}
    \label{fig:contourplot}
\end{figure}

\subsection{The {\qsd}}

The notion of a {\qsd} is a very natural one. Think of an observer who
at some time $t$ is aware that the population is extant, yet cannot
discern the trait composition. What is the chance of there being a
particular trait composition $\bn=(n_0,n_1)$? If we were equipped with
the complete set of state probabilities $p_{\bn}(t)=\Pr(\bn(t) =\bn)$,
$\bn\in E$, we would evaluate the conditional probability
$$
u_{\bn}(t)=\Pr(\bn(t) =\bn | \bn(t)\neq \bzero)
= \frac{p_{\bn}(t)}{1-p_{\bzero}(t)},
\qquad
\bn\in S^{\myprime}:=E\backslash \{\bzero\},
$$
because we have observed the event $\{\bn(t) \neq \bzero\}$.
Then, in view of the quasi-equilibrium behaviour observed earlier, it
would be natural to seek a distribution $\bu=(u_{\bn},\, \bn\in
S^{\myprime})$ over~$S^{\myprime}$ such that if $u_{\bn}(s)=\bu$ for a
particular~$s$, then $u_{\bn}(t)=\bu$ for all $t>s$. Such a distribution
is called a {\em stationary conditional distribution\/} or {\em
quasi-stationary distribution\/}.

Since $S^{\myprime}$ is finite and irreducible, the {\qsd}~$\bu$ is
unique, being the eigenvector of the $q$-matrix $Q$ restricted
to~$S^{\myprime}$, corresponding to the eigenvalue $-\nu$ with maximum
real part (the $q$-matrix is the matrix of transition rates with
diagonal entries $q(\bn,\bn)$ equal to $-q(\bn)$, where $q(\bn)$, the
total rate out of state $\bn$, is given by~(\ref{qn})): $Q\bu =-\nu
\bu$. The eigenvalue $-\nu$ is real, simple, and strictly negative.
Furthermore, again since $S^{\myprime}$ is finite and irreducible, $\bu$
is also a {\em limiting conditional distribution\/} in that
$u_{\bn}(t)\to \bu$ as $t\to\infty$, whereever the process is first
observed. The quantity $\nu$ is often called the {\em decay
parameter\/}. For example, if $T$ is the time of extinction, then
$\Pr_{\bu}(T>t)=e^{-\nu t}$. That is, if $\bu$ is the
initial distribution over states (we start in quasi equilibrium),
then the time to extinction is exponentially distributed with mean $1/\nu$.
For further details, see~\cite{vDP12}.

Thus, in principal, we can evaluate~$\bu$ for our model. However, in all
but a handful of simple $1$-dimensional Markov chains, $\bu$ cannot be
exhibited explicitly. Even for the continuous-time SIS model mentioned
earlier, which in our case governs the behaviour of $s(t)=n_0(t)+n_1(t)$
when there is no selection ($\mu_0=\mu_1$), there is no explicit
expression available for $\bu$. Furthermore, standard numerical methods
break down if the state space is even moderately large. For example, if
$N=1000$, then the restricted $Q$ has $501\,500^2$
($=251\,503\,253\,001$) entries, and thus would require approximately
$2.01$TB of RAM to be stored. For this reason, approximation methods are
preferred. The {\OU} approximation has particular appeal as its
stationary distribution is expected to be close the {\qsd} once suitably
discretised. (By ``discretised'' we mean that, for all $\bn=(n_0,n_1)$,
we evaluate the probability that our bivariate normal random variable
lies in the square delimited by points $(n_0 \pm 0.5)/N$ and $(n_1 \pm
0.5)/N$). We will illustrate this closeness by evaluating $\bu$ for our
model with parameters $N=200$, $\lambda = 2.5$, $\mu_0 = 0.9$, $\mu_1 =
1$, and $q = 0.9$ (the same values used in the simulations depicted in
Figures~\ref{fig:range}b and~\ref{fig:contourplot}). For these parameter
values only $0.024$ percent of the entries in the restricted $Q$ are
non-zero, so it is convenient to use the sparse-matrix methods outlined
in Appendix~A of~\cite{vDP12} (supplementary material). The
quasi-stationary distribution was evaluated using Matlab's {\tt eig}
command with the default settings. Figure~\ref{fig:qsdversusOU}
illustrates the result. The dotted lines are the quasi-stationary
marginal distributions of $X_0$ (red) and $X_1$ (blue), while the solid lines
correspond to probabilities predicted by the {\OU} approximation. The means
of these distributions are located near $x_0^*=0.39063$ and
$x_1^*=0.23437$, respectively. The corresponding approximating
standard deviations of $X_0$ and $X_1$ are, respectively, $0.065306$ and
$0.059152$, which appear to be consistent with sizes of the fluctuations
depicted in Figure~\ref{fig:range}b. It is interesting to note that the
decay parameter $\nu$ was estimated to be $5.66\times 10^{-14}$, which
indicates that the quasi stationary regime is very long-lived.

The above comparison supports the use of the {\OU} approximation as an
effective approximation to the quasi-stationary distribution.

\begin{figure}[htbp]
    \centering
    \includegraphics[width=0.8\textwidth]{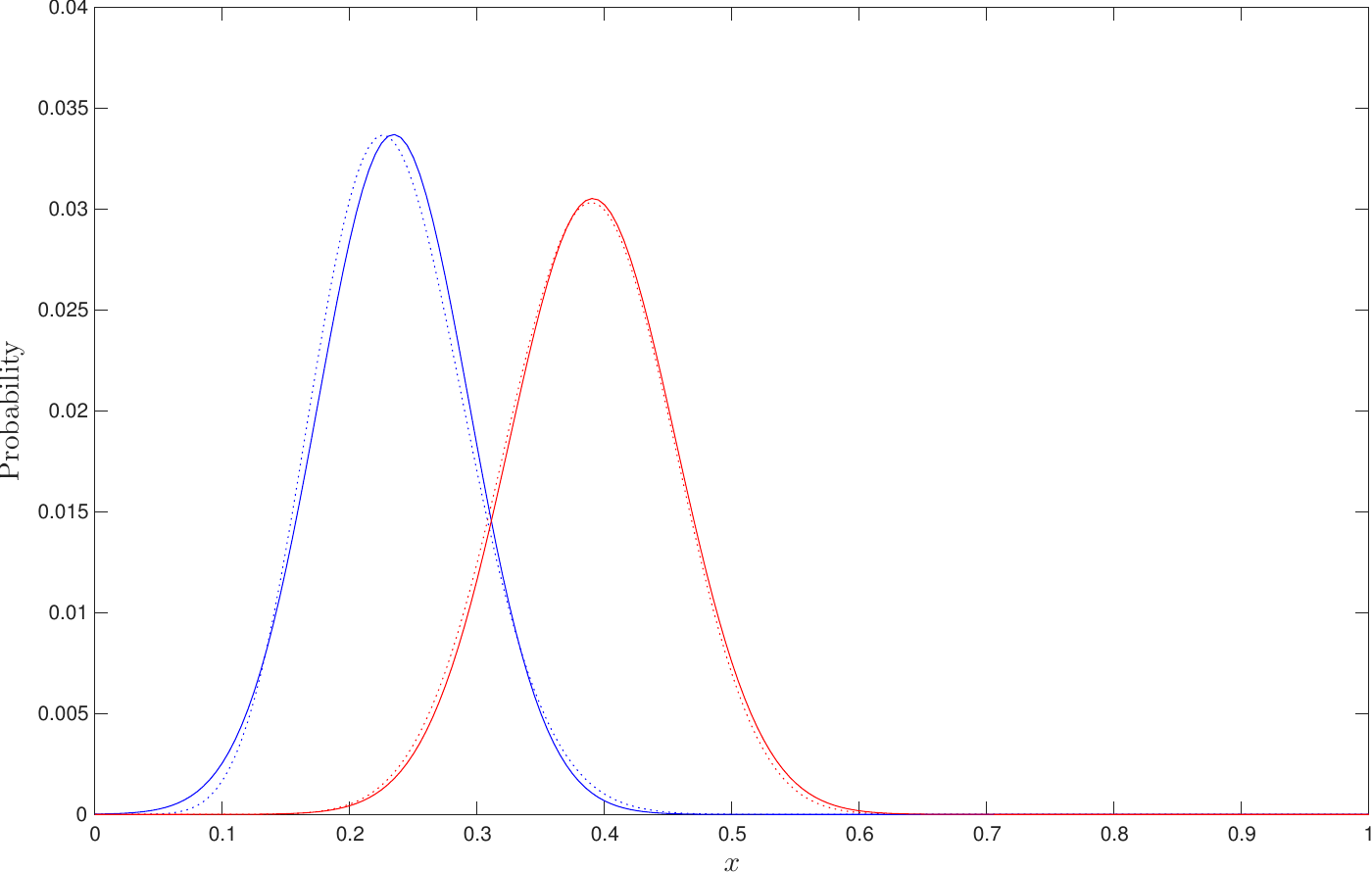}
    \caption{A comparison between the {\qsd} and the {\OU}
approximation. The dotted lines are the quasi-stationary marginal
distributions of $X_0$ (red) and $X_1$ (blue). The solid lines
correspond to probabilities predicted by the {\OU} approximation. Here
$N=200$, $\lambda = 2.5$, $\mu_0 = 0.9$, $\mu_1 = 1$, and $q = 0.9$.}
    \label{fig:qsdversusOU}
\end{figure}

\section{Conclusion}

We have examined how mutation, selection, and density dependence
interact to determine both population persistence and trait composition.
By embedding imperfect inheritance within a regulated population, the
model shows that mutation does more than introduce variation: it
modifies the effective demographic balance. In particular, mutation
induces an effective mortality rate that governs whether the population
can be sustained, providing a simple criterion for persistence. Beyond
this primary threshold, the model reveals a second threshold in which
population establishment depends on initial trait composition. When
inheritance dominates mutation, some initial configurations lead to
growth while others lead to extinction, even when overall growth rates
are favourable. This composition-dependent behaviour represents a
qualitative departure from classical models in which persistence is
determined solely by net reproduction.

The analysis also clarifies the role of mutation in maintaining
diversity. Although one trait is selectively inferior, it can persist
through continual reintroduction via mutation, leading to a non-trivial
mutation-selection balance. In stochastic regimes, the population
exhibits long-lived quasi-equilibrium behaviour prior to extinction,
well captured by a diffusion approximation. The stochastic analysis
further reveals a transition in the sign of trait correlations,
reflecting a shift between fluctuations in total population size and in
trait imbalance.

These findings arise within a deliberately simple two-trait framework. While this setting allows a detailed and largely explicit analysis, it also imposes limitations, including homogeneous mixing, constant environmental conditions, and a simple mutation mechanism governed by a single parameter. Selection acts only through mortality, and density dependence is imposed through a fixed population ceiling.

Several natural extensions suggest themselves. Extending the model to multiple or continuous traits would provide a richer description of mutation–selection dynamics under ecological regulation. Allowing time-varying mutation rates or environmental conditions could reveal additional interactions between variability and persistence thresholds. It would also be of interest to consider alternative forms of density dependence arising from explicit resource dynamics, and to explore how these affect the effective mortality concept and associated threshold behaviour.

Overall, the results highlight how even simple forms of imperfect
inheritance can qualitatively reshape both ecological persistence and
evolutionary outcomes. In regulated populations, mutation is not merely
a source of variation, but a structural component of the dynamics,
governing both persistence and the maintenance of trait diversity.

\section{AI usage}

The author used Microsoft Copilot (GPT-based AI) as a tool to assist
with phrasing, exposition, and checking intermediate calculations. The
author takes full responsibility for all content, including all
mathematical results.

\section{Appendix}

\subsection{Existence and uniqueness}
\label{appendix1}

Here we establish that there is a unique solution to~(\ref{ODE1}) in~$E$.

\begin{lemma}\label{locallyLipschitz}
$F$ given by (\ref{F0x}) and (\ref{F1x}) is locally Lipschitz in $\R^2$.
\end{lemma}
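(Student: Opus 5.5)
The plan is to use the fact that $F$ is a polynomial map: each component in (\ref{F0x})--(\ref{F1x}) is a polynomial of degree two in $(x_0,x_1)$. Polynomials are $C^1$ on $\R^2$, and any $C^1$ map is locally Lipschitz by the mean value inequality. I will make this explicit so that a concrete Lipschitz constant is available on each bounded set.

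First, fix a point $\bx\in\R^2$ and a closed ball $K=\{\by:\norm{\by-\bx}\leq 1\}$; any bounded convex neighbourhood would do, and convexity is what matters. The Jacobian $B(\by)=\nabla F(\by)$ is already written down in (\ref{Jac}). Its entries are affine functions of $\by$, so they are continuous and hence bounded on the compact set $K$. One explicit bound comes from $|y_i|\leq R:=\norm{\bx}+1$ on $K$, which gives entries bounded by
\[
L:=\lambda(3+3R)+\mu_1
\]
or some similar crude constant; its exact form is irrelevant.

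Next, for $\by,\bz\in K$ I apply the fundamental theorem of calculus along the segment joining them:
\[
F(\by)-F(\bz)=\int_0^1 B(\bz+t(\by-\bz))(\by-\bz)\,dt .
\]
The segment stays in $K$ by convexity. This gives $\norm{F(\by)-F(\bz)}\leq C L\norm{\by-\bz}$, where $C$ is a norm-equivalence constant; $C=2$ works when the Frobenius operator-norm bound is used. Hence $F$ is Lipschitz on a neighbourhood of every point, which is the definition of locally Lipschitz. The Lipschitz constant cannot be chosen globally, because the entries of $B$ grow linearly in $\norm{\by}$; this is why only local Lipschitz continuity is claimed.

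There is no real obstacle here. The only point needing care is to state clearly which definition of local Lipschitz continuity is used (Lipschitz on a neighbourhood of each point, or equivalently on each compact set) and to invoke convexity of the ball when applying the mean value argument. This lemma is then the input to the Picard--Lindel\"of theorem for local existence and uniqueness. That existence and uniqueness, together with the forward invariance of $E$ and the compactness of $E$, gives global solutions starting in $E$.
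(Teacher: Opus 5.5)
Your argument is correct, but it takes a different route from the paper's. You use the generic fact that a $C^1$ map is Lipschitz on any bounded convex set. Concretely, you bound the entries of the Jacobian (\ref{Jac}) on a ball and integrate along segments.

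The paper avoids derivatives altogether. It writes $F(\bx)=\lambda(1-m(\bx))P\bx-D\bx$, with $P$ the stochastic mutation matrix and $D=\operatorname{diag}(\mu_0,\mu_1)$. It then uses the algebraic identity $G(\bx)-G(\by)=\lambda\bigl((1-m(\bx))P(\bx-\by)+(m(\by)-m(\bx))P\by\bigr)$ together with $\norm{P\bx}_1\leq\norm{\bx}_1$ and $|m(\bx)-m(\by)|\leq\norm{\bx-\by}_1$. This gives the constant $\mu_1+\lambda(1+2R)$ on the $\ell^1$-ball of radius $R$, with no convexity or norm-comparison constant needed.

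What your approach buys is generality and economy: it applies verbatim to any polynomial vector field and reuses a Jacobian already computed. What the paper's buys is a sharper, structure-adapted constant that is reused later. Taking $R=1$, since $E$ lies in the unit $\ell^1$-ball, gives the constant $\mu_1+3\lambda$ on $E$ invoked for Kurtz's theorem in Appendix~\ref{appendix2}. Your method would also give a constant on $E$, because $E$ is convex and compact, only a cruder one.
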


\begin{proof}
Write $F(\bx) =G(\bx) - D\bx$, where $G(\bx)=\lambda (1-m(\bx))P\bx$,
$D=\operatorname{diag}(\mu_0,\mu_1)$, and
$$
P=
\begin{pmatrix}
q   & 1-q \\
1-q & q
\end{pmatrix}.
$$
Then, for all $\bx,\by\in \R^2$,
$$
\norm{D(\bx-\by)}_1
=\sum_i |(D\bx)_i - (D\by)_i|
=\sum_i |\mu_i x_i - \mu_i y_i|
\leq \mu_1 \sum_i |x_i - y_i|
=\mu_1 \norm{\bx-\by}_1.
$$
Also, for all $\bx,\by\in \R^2$,
$$
G(\bx)-G(\by) = \lambda \left( (1-m(\bx)) P(\bx-\by) +
(m(\by)-m(\bx))P\by \right),
$$
and so, since 
$|m(\bx)-m(\by)| \leq \sum_i |x_i -y_i| =\norm{\bx-\by}_1$, 
and $\norm{P\bx}_1 \leq \norm{\bx}_1$, we have
\begin{align*}
\norm{G(\bx)-G(\by)}_1 
&\leq  \lambda 
\left( |1-m(\bx)| \norm{\bx-\by}_1 + |m(\by)-m(\bx)|\norm{\by}_1
\right)\\
&\leq
\lambda (1 + \norm{\bx}_1 + \norm{\by}_1)\norm{\bx-\by}_1.
\end{align*}
Therefore, $\norm{G(\bx)-G(\by)}_1 \leq \lambda (1 + 2R)\norm{\bx-\by}_1$ on
the ball $B_R=\{\bx\in \R^2: \norm{\bx}_1\leq R\}$, and
so $\norm{F(\bx)-F(\by)}_1 
\leq (\mu_1+\lambda (1 + 2R))\norm{\bx-\by}_1$ on $B_R$.
\end{proof}

Lemma~\ref{locallyLipschitz} allows us to apply the 
Picard–Lindel{\"o}f Theorem (see for example \cite{VE21}): 
if $U$ is an open subset of~$\R^2$, then
there is an $\epsilon>0$ such that, if $\bx(0)\in U$, then $\bx(t)\in U$ 
for all $t\in (-\epsilon,\epsilon)$.
However, we will show that any solution to~(\ref{ODE1}) 
which starts in $E$, remains in $E$.
(It is obvious that the interior of $E$ is attracting, because
on the boundaries of $E\backslash \{\bzero\}$, 
the vector field points inwards: 
$F_0(0,x_1)=\lambda(1-q)x_1(1-x_1)>0$, for $0<x_1<1$,
$F_1(x_0,0)=\lambda(1-q)x_0(1-x_0)>0$, for $0<x_0<1$,
and on the upper boundary, $m=1$, we have 
$\dot{m}=-\mu_0 x_0 -\mu_1 x_1 <0$.)

\begin{theorem}
If $\bx_0\in E$, then $\bx(t)\in E$ for all $t>0$.
\end{theorem}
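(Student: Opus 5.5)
The plan is to prove forward invariance of the closed triangle $E=\{x_0\ge 0,\ x_1\ge 0,\ x_0+x_1\le 1\}$ by a first-exit-time argument, combined with a priori boundedness so that the local solution from Lemma~\ref{locallyLipschitz} and Picard--Lindel\"of extends to all $t>0$. Because $E$ is compact, the locally Lipschitz bound gives a global Lipschitz constant on a neighbourhood of $E$. Consequently, once we know the solution cannot leave $E$, it cannot blow up and therefore exists, uniquely, for all $t\ge 0$.

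For invariance, the natural move is to write $E$ as the intersection of the three half-planes $\{x_0\ge 0\}$, $\{x_1\ge 0\}$ and $\{m\le 1\}$, and to check each one using sign structure rather than just the inward-pointing vector field on the open edges. The corners ($\bzero$, $(1,0)$, $(0,1)$) would otherwise need separate treatment.

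\emph{Upper edge.} For $\bx$ with $x_0,x_1\ge 0$, equation (\ref{PKPh}) gives
\[
\dot m=\lambda m(1-m)-\mu_0x_0-\mu_1x_1\le \lambda m(1-m).
\]
Suppose $m(t_1)>1$ at some time $t_1$ while the nonnegativity constraints still hold. Let $\tau=\sup\{t<t_1: m(t)\le 1\}$. On $(\tau,t_1]$ we have $\dot m\le\lambda m(1-m)<0$, while $m(\tau)=1$; this contradicts $m(t_1)>1$.

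\emph{Side edges.} Write
\[
\dot x_0=-\bigl(\mu_0-\lambda q(1-m)\bigr)x_0+\lambda(1-q)x_1(1-m)=:-a(t)x_0+b(t).
\]
Here $a$ is continuous and $b(t)\ge0$ whenever $x_1\ge0$ and $m\le1$. Variation of constants then gives
\[
x_0(t)=e^{-\int_0^t a}\,x_0(0)+\int_0^t e^{-\int_s^t a}\,b(s)\,ds\ge 0,
\]
and symmetrically for $x_1$. This handles $q=1$ as well, since then $b\equiv 0$ and the axes are invariant.

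The conditions are coupled, so I would close the argument with a single first-exit time. Let
\[
\tau=\inf\{t: \bx(t)\notin E\},
\]
and suppose $\tau<\infty$. On $[0,\tau]$ all three inequalities hold, so $b\ge 0$ there. To get strict room beyond $\tau$, I would apply the argument to the perturbed field $F+\varepsilon(1,1)-\varepsilon(0,0)$, or more simply approximate with $F_\varepsilon$, which points strictly inward on all of $\partial E$ including the corners. For $F_\varepsilon$ the first exit is impossible by the elementary tangency argument. Letting $\varepsilon\to0$ and using continuous dependence on the vector field (again Lemma~\ref{locallyLipschitz} with Gronwall) transfers the invariance to $F$.

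The main obstacle is the corners and the point $\bzero$. At $\bzero$ the field vanishes, which is harmless, since by uniqueness the solution stays there. At $(1,0)$ we have $F=(-\mu_0,0)$, which is tangent to the edge $x_1=0$, so the naive ``strictly inward'' argument fails. The perturbation $F_\varepsilon=F+\varepsilon\bigl(\tfrac13-x_0,\tfrac13-x_1\bigr)$, pulling toward the centroid, fixes this, because it is strictly inward on all of $\partial E$. That is the step I would carry out most carefully.
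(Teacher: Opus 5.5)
Your argument is correct, but it takes a different route from the paper.

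The paper's route: it uses the mild (Duhamel) form obtained from the integrating factor $e^{Dt}$. Summing over $i$ gives $m(\bx(t))\le m(\bx(0))+\int_0^t\lambda m(\bx(s))(1-m(\bx(s)))\,ds$. It compares this with the logistic flow to get $m\le 1$, and then reads off $x_i\ge 0$ from positivity of the integrand.

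Your route: you reduce everything to the weak inward-pointing inequalities on $\partial E$, namely $F_0\ge 0$ on $x_0=0$, $F_1\ge 0$ on $x_1=0$, and $\dot m\le 0$ on $m=1$. You make them strict by adding $\varepsilon(\tfrac13-x_0,\tfrac13-x_1)$, which raises the derivative of each active constraint by exactly $\varepsilon/3$. You then run the elementary first-exit argument for $F_\varepsilon$ and let $\varepsilon\to 0$ via Gronwall.

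What each buys: the paper's route yields a quantitative by-product, domination of $m$ by a logistic solution. As written, however, it is circular at precisely the coupling you flag. Both the bound $\sum_i e^{-\mu_i(t-s)}(P\bx(s))_i\le m(\bx(s))$ and the final ``integrand is non-negative'' step tacitly use $x_k(s)\ge 0$. Your perturbation treats all three constraints at once, including the corners and the case $q=1$, where $F$ is tangent along whole edges. It also gives global existence for free from compactness of $E$, and it works verbatim for any convex polygon with a weakly inward, locally Lipschitz field.

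Two small points. First, once the perturbation is in place, the separate upper-edge and variation-of-constants computations are unnecessary; each is conditional on the other constraints, as you note. Second, the garbled $F+\varepsilon(1,1)-\varepsilon(0,0)$ should be dropped, or replaced by $F+\varepsilon(1,1)$. That field is strictly inward only for $\varepsilon<\mu_0/2$, whereas the centroid pull works for every $\varepsilon>0$.
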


\begin{proof}
Integrating $\dot{\bx}=G(\bx) - D\bx$ using the integrating factor
$e^{Dt}$ we get
$$
\bx(t)=e^{-D t}\bx(0)+\int_0^t e^{-D(t-s)}\,G(\bx(s))\,ds,
$$
or, componentwise,
\begin{equation}
x_i(t)=e^{-\mu_i t}x_i(0)+\int_0^t
e^{-\mu_i(t-s)}\,\lambda\bigl(1-m(\bx(s))\bigr)\, (P\bx(s))_i\,ds,
\qquad i\in \{0,1\}.
\label{mildsolution1}
\end{equation}

Suppose that $\bx(0)\in E$, that is, 
$x_i(0)\geq 0$ for both $i$, and $m(\bx(0))\leq 1$.
We will first show $m(\bx(t))\leq 1$ for all $t$.
Let $t_*:=\inf\{t>0:\ m(\bx(t))>1\}$ 
($t_*=\infty$ if the set is empty).
Fix $t<t_*$. Then $m(\bx(s))\le 1$ for all $s\in[0,t]$, and hence 
$1-m(\bx(s))\geq 0$ on this interval. 
On summing (\ref{mildsolution1}) over $i$ we find that 
\[
m(\bx(t))=\sum_i e^{-\mu_i t}x_i(0)
+\int_0^t \lambda(1-m(\bx(s)))\sum_i e^{-\mu_i(t-s)}(P\bx(s))_i\,ds.
\]
But, $\sum_i e^{-\mu_i t}x_i(0)\le \sum_i x_i(0)=m(\bx(0))$, and
$$
\sum_i e^{-\mu_i(t-s)}(P\bx(s))_i
=\sum_k x_k(s)\sum_i e^{-\mu_i(t-s)}p_{i-k}\\
\leq \sum_k x_k(s)=m(\bx(s)),
$$
and hence 
\begin{equation}
m(\bx(t))\le m(\bx(0))+\int_0^t \lambda(1-m(\bx(s)))\,m(\bx(s))\,ds.
\label{mbxt1}
\end{equation}
Now compare the right-hand side of~(\ref{mbxt1}) with the
integrated form of the solution to the
logistic equation $\dot{u}=f(u)$, where $f(u)=\lambda u(1-u)$ and 
$u(0)=u_0:=m(\bx(0))\le 1$:
$$
u(t) = u_0 + \int_0^t \lambda (1-u(s)) u(s)\, ds.
$$
We will show that $m(\bx(t))\leq u(t)$ for all $t\geq 0$. 
Since $u(t)<1$ for all~$t$, it will follow that $t_*=\infty$. 
Let $\phi_t(x_0)$ be the flow corresponding to $\dot{u}=f(u)$,
and notice that the flow is monotone in that if $u_2\geq u_1>0$,
then $\phi_t(u_2)\geq \phi_t(u_1)$ for all~$t$.
Let $\tau := \inf\{t>0 : m(\bx(t)) > u(t)\}$.
Then, $m(\bx(t))\leq u(t)$ for all $t<\tau$, and $m(\bx(\tau))=u(\tau)$
by continuity. Now fix $h>0$ and observe that
$$
m(\bx(\tau+h)) \leq \phi_h(m(\bx(\tau)))=\phi_h(u(\tau))= u(\tau+h),
$$
which is a contradiction. Therefore $m(\bx(t))\leq u(t)$.
It is now easy to see from (\ref{mildsolution1}) that $x_i(t)\geq 0$
for all $i$, since the integrand is non-negative.
So, finally, $\bx(t)\in E$ for all $t>0$.
\end{proof}

\subsection{Law of large numbers}
\label{appendix2}

We will apply Theorem~3.1 of~\cite{Kur70}. Using Kurtz's notation,
$$
f(\bx,\bl) =
\begin{cases}
\lambda(q x_0 + (1-q)x_1)(1-x_0-x_1) &\text{if } \bl=(1,0) \\
\lambda((1-q) x_0 + qx_1)(1-x_0-x_1) &\text{if } \bl=(0,1) \\
\mu_0 x_0 &\text{if } \bl=(-1,0) \\
\mu_1 x_1 &\text{if } \bl=(0,-1).
\end{cases}
$$
Then, $F(\bx)=\sum_{\bl} \bl f(\bx,\bl)$, as given by
(\ref{F0x}) and (\ref{F1x}).
Referring to the conditions of that theorem,
it is readily seen from the proof of Lemma~\ref{locallyLipschitz}
that $F$ is Lipschitz continuous on $E$ with Lipschitz constant
$\mu_1+3\lambda$.
Next,
$$
\sup_{x\in E} \sum_{\bl} |\bl| f(\bx,\bl)
=
\sup_{x\in E} \left( (x_0+x_1)(1-x_0-x_1) + \mu_0 x_0 + \mu_1 x_1
\right) \leq 1+ \mu_1 \ (<\infty),
$$
and so Condition~(3.3) holds good. 
For $d>1$, there is no $\bl$ such
that $|\bl|>d$, and so Condition~(3.4), that
$$
\lim_{d\to\infty} \sup_{x\in E} \sum_{|\bl|>d} |\bl| f(\bx,\bl) =0,
$$
is trivially satisfied.
Define a family of Markov chains $\bXN=(\bXN(t),\, t\geq 0)$,
where $\bXN(t)=\bn(t)/N$. Then, Theorem~3.1 of~\cite{Kur70} allows
us to conclude the following.

\begin{theorem}
\label{LLN}
If $\lim_{N \to \infty} \bXN(0)=\bx_0$,
for some $\bx_0\in E$, then
$\bXN$ converges in probability, uniformly
over finite time intervals, to $(\bx(t),\, t\geq 0)$,
the unique solution to (\ref{ODE1}) in $E$ with prescribed initial
condition $\bx(0)=\bx_0$, that is
$$
\lim_{N \to \infty}
\Pr\left(\sup_{0\leq s\leq t} \left| \bXN(s)-\bx(s)\right|>\epsilon\right)=0,
$$
for all $t>0$ and for all $\epsilon>0$.
\end{theorem}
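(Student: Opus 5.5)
The plan is to verify the hypotheses of Kurtz's Theorem~3.1 of~\cite{Kur70} for the density-dependent family $\bXN=\bn/N$ and read off the conclusion.

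\emph{Step 1: density dependence.} The rates~(\ref{Q}) have the form $q(\bn,\bn+\bl)=N f(\bn/N,\bl)$ for the four jump vectors $\bl\in\{(\pm1,0),(0,\pm1)\}$, with $f$ as displayed above. For example, $\lambda(n_0q+n_1(1-q))(1-(n_0+n_1)/N)=N\,\lambda(qx_0+(1-q)x_1)(1-x_0-x_1)$ at $\bx=\bn/N$. Hence $\bXN$ is a Markov chain on $E_N=E\cap N^{-1}\Z^2\subset E$ with drift $F(\bx)=\sum_{\bl}\bl f(\bx,\bl)$, which is exactly~(\ref{ODE1}).

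\emph{Step 2: Lipschitz continuity.} From the proof of Lemma~\ref{locallyLipschitz} with $R=1$, $F$ is Lipschitz on $E$ with constant $\mu_1+3\lambda$. Indeed, $E\subset B_1$, and we use $\mu_i\le\mu_1$.

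\emph{Step 3: moment conditions.} Since $|\bl|=1$ for every jump, $\sum_{\bl}|\bl|f(\bx,\bl)=\lambda m(1-m)+\mu_0x_0+\mu_1x_1\le \lambda/4+\mu_1$ on $E$. This gives~(3.3). Condition~(3.4) holds trivially because there are no jumps of size exceeding~1.

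\emph{Step 4: apply Kurtz.} His Theorem~3.1 then yields, for $\bXN(0)\to\bx_0$, that $\sup_{s\le t}|\bXN(s)-\bx(s)|\to0$ in probability, where $\bx$ is the solution of~(\ref{ODE1}) with $\bx(0)=\bx_0$. The preceding Theorem of Appendix~\ref{appendix1}, together with Picard--Lindel\"of via Lemma~\ref{locallyLipschitz}, guarantees that this solution exists, is unique, and stays in~$E$ for all $t\ge0$.

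\emph{Main obstacle.} The only delicate point is that Kurtz's theorem is stated with $F$ Lipschitz on the whole space, or on a set containing the trajectory together with a neighbourhood of it. Here $F$ is only locally Lipschitz on $\R^2$, and the chain lives in $E$.

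I would handle this in one of two ways. The first is to extend $F$ (and each $f(\cdot,\bl)$) outside $E$ to a globally Lipschitz, bounded function agreeing with it on $E$, for instance by composing with the $1$-Lipschitz projection onto the convex set $E$. Neither the chain nor the limit ever leaves $E$, so the extension changes nothing.

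The second is to use the localized form of the theorem: since $\bx([0,t])\subset E$ is compact and $F$ is Lipschitz on the compact set $E$, the Gronwall estimate in Kurtz's proof goes through unchanged. In either version the martingale part is controlled by the uniform bound from Step 3, with variance $O(1/N)$, and Doob's inequality finishes the argument.
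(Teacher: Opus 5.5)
Your proposal is correct and follows the paper's argument. Like the paper, you check density dependence, take the Lipschitz constant $\mu_1+3\lambda$ for $F$ on $E$ from the proof of Lemma~\ref{locallyLipschitz}, verify Kurtz's conditions (3.3) and (3.4), and then invoke Theorem~3.1 of~\cite{Kur70} together with the existence--uniqueness result of Appendix~\ref{appendix1}. Two of your details improve on the paper: your bound $\lambda/4+\mu_1$ in (3.3) is the correct one (the paper's display drops the factor $\lambda$), and your extension of each $f(\cdot,\bl)$ by composing with the projection onto $E$ cleanly handles the open-set/global-Lipschitz technicality that the paper passes over.
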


\noindent
Convergence over {\em finite\/} time intervals is very important here,
because, as we have already observed, $\bn(t)$, hence $\bXN(t)$, reaches
$\bzero$ in finite mean time.

\subsection{Diffusion approximation}
\label{appendix3}

It will be convenient to apply Theorem~3.2 of Pollett~\cite{Pol90},
which is derived from Theorems~3.1 and~3.5 of Kurtz~\cite{Kur71}. 
Let $G(\bx)$ be the ($2\times 2$) matrix defined by
$G_{ij}(\bx) =\sum_{\bl} l_i l_j f(\bx,\bl)$.
So, the off diagonal entries are $0$, and diagonal entries are
$$
G_{00}(\bx) = \lambda(q x_0 + (1-q)x_1)(1-x_0-x_1)+\mu_0 x_0
= F_0(\bx)+2\mu_0 x_0,
$$
$$
G_{11}(\bx) = \lambda((1-q) x_0 + q x_1)(1-x_0-x_1)+\mu_1 x_1
= F_1(\bx)+2\mu_1 x_1.
$$
In addition to the conditions checked in Appendix~\ref{appendix2},
we need to check that $F$ has uniformly continuous first partial
derivatives. This is clearly true because
$\nabla F(\bx)$ (Eqn.~\ref{Jac}) is a quadratic function. 
So, we have the following central limit law.

\begin{theorem}
\label{CLT}
If $\lim_{N \to \infty} N^{\frac12} \left(\bXN(0)-\bx(0) \right)=\bz$,
then the family of processes $(\bZN(t),\, t\geq 0)$, defined by
$$
\bZN(s)=N^{\frac12} \left(\bXN(s)-\bx(s) \right),
\qquad 0\leq s \leq t,
$$
converges weakly in $D[0,t]$
(the space of right-continuous, left-hand limits functions on $[0,t]$)
to a Gaussian diffusion $(\bZ(t),\, t\geq 0)$, with initial value $\bZ(0)=\bz$,
and with $\E(\bZ(s))=M_s \bz$, where
$M_s=\exp \left( \int_0^s B_u du \right)$, $B_s=\nabla F(\bx(s))$.
\end{theorem}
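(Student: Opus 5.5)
The plan is to obtain Theorem~\ref{CLT} directly from Theorem~3.2 of Pollett~\cite{Pol90}, which in turn rests on Theorems~3.1 and~3.5 of Kurtz~\cite{Kur71}. That result applies to any density-dependent family whose rates satisfy four conditions: the LLN conditions of Kurtz~\cite{Kur70}; a uniform bound on the second moments $\sum_{\bl}|\bl|^2 f(\bx,\bl)$; continuity of $G$; and uniformly continuous first derivatives of $F$ on the relevant region. Its conclusion is exactly our statement. For $\bZN$ started at $\bz$, there is weak convergence in $D[0,t]$ to a Gaussian diffusion with drift $B_s\bZ$, $B_s=\nabla F(\bx(s))$, and local covariance $G(\bx(s))$. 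So the work consists of verifying the hypotheses for our rates and then reading off the mean.

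First I would recall what Appendix~\ref{appendix2} already gives: $F$ is Lipschitz on $E$, and the first-moment bound and Condition~(3.4) hold. Since the only jumps are $\bl\in\{(\pm1,0),(0,\pm1)\}$, we have $|\bl|^2=|\bl|$, so the second-moment bound $\sup_{\bx\in E}\sum_{\bl}|\bl|^2 f(\bx,\bl)\le \lambda+\mu_1$ follows at once. $G$ is polynomial, hence continuous and bounded on the compact set $E$. $\nabla F$ in (\ref{Jac}) is affine in $\bx$, so it is uniformly continuous on $E$; indeed it is Lipschitz. By the Theorem of Appendix~\ref{appendix1}, the limiting trajectory $\bx(s)$ stays in $E$, so all these bounds apply along the path over $[0,t]$.

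Second, I would identify the limit. The diffusion satisfies $d\bZ=B_s\bZ\,ds+dW_s$, where $W$ is a Gaussian martingale with $\langle W\rangle_s=\int_0^s G(\bx(u))du$. Taking expectations gives $\frac{d}{ds}\E\bZ(s)=B_s\E\bZ(s)$ with $\E\bZ(0)=\bz$. Hence $\E\bZ(s)$ is the fundamental solution applied to $\bz$, which the statement writes as $M_s\bz$ with $M_s=\exp(\int_0^s B_u du)$.

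The main obstacle is this last identification. The exponential of the integral equals the true propagator only when the $B_u$ commute. In general one should read $M_s$ as the solution of $\dot M_s=B_sM_s$, $M_0=I$, i.e.\ a time-ordered exponential, and I would state that interpretation explicitly. Commutativity does hold in the cases where $\bx(s)$ is stationary ($\bx(0)=\bx^*$, so $B_s\equiv B(\bx^*)$) and in the $q=1/2$ case, where $B$ depends only on $m$ and the $B_u$ commute. Those are exactly the cases used for the Ornstein--Uhlenbeck corollary.
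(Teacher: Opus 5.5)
Your route is the paper's: apply Theorem~3.2 of \cite{Pol90}, which rests on \cite{Kur71}, after checking its hypotheses. Your checks are, if anything, more careful than the paper's: unit jumps make the second-moment conditions coincide with the first-moment ones, and $\nabla F$ is indeed affine in $\bx$. Your caveat about the mean is correct and sharpens the statement; the paper's proof simply reads the formula off the cited theorem. In general $\E(\bZ(s))=\Phi_s\bz$, where $\dot\Phi_s=B_s\Phi_s$ and $\Phi_0=I$. This agrees with $\exp(\int_0^s B_u\,du)\,\bz$ only when the $B_u$ commute.

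One side remark is false, although nothing in your argument depends on it: for $q=\tfrac12$ the $B_u$ do \emph{not} commute when $\mu_0\neq\mu_1$. Write $a(u)=\lambda(\tfrac12-m(u))$. A direct computation gives $(B_uB_v-B_vB_u)_{12}=(\mu_1-\mu_0)\bigl(a(v)-a(u)\bigr)$, which is non-zero whenever $m(u)\neq m(v)$, and that holds along any non-equilibrium trajectory. Dependence on $m$ alone does not help. Here $B_u=C_0+a(u)C_1$, with $C_0=-\operatorname{diag}(\mu_0,\mu_1)$ and $C_1$ the all-ones matrix, and these commute only if $\mu_0=\mu_1$. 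Nor is $q=\tfrac12$ relevant to Corollary~\ref{CLTOU}. The only commuting case needed there is the stationary one, $B_s\equiv\nabla F(\bx^*)$.
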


\noindent
The covariance matrix $\Sigma_s$ of $\bZ(s)$ is the unique solution to
$$
\frac{d\Sigma_s}{ds} = B_s \Sigma_s  +\Sigma_s  B_s ^T+G(X(s)),
$$
that is,
$$
\Sigma_s=M_s\left( \int_0^s M_u^{-1} G(\bx(u)) (M_u^{-1})^T du \right) M_s^T.
$$
If $\bXN$ starts near an equilibrium point $\bx^*$
of the limiting deterministic model, 
then we can be more precise about the limiting diffusion.

\begin{corollary}
\label{CLTOU}
Suppose $\lim_{N \to \infty} \bXN(0)=\bx^*$. Then, if
$\lim_{N \to \infty} N^{\frac12} \left(\bXN(0)-\bx^* \right)=\bz$,
the family of processes $(\bZN(t),\, t\geq 0)$, defined by
$$
\bZN(s)=N^{\frac12} \left(\bXN(s)-\bx^* \right), \qquad 0\leq s\leq t,
$$
converges weakly in $D[0,t]$ to an {\OU} process $(\bZ(t),\, t\geq 0)$
with local drift matrix $B=\nabla F(\bx^*)$, local covariance matrix
$G=G(\bx^*)$, and with initial value $\bZ(0)=\bz$.
\end{corollary}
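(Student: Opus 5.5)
The plan is to derive Corollary~\ref{CLTOU} from Theorem~\ref{CLT} by specialising to the constant trajectory $\bx(s)\equiv\bx^*$, and then reading off the characteristics of the limit. First, since $\bx^*$ is an equilibrium, $F(\bx^*)=\bzero$, so by the uniqueness of solutions in $E$ (Appendix~\ref{appendix1}), the solution of (\ref{ODE1}) with $\bx(0)=\bx^*$ is $\bx(s)=\bx^*$ for all $s\geq 0$. The hypothesis $N^{1/2}(\bXN(0)-\bx^*)\to\bz$ is then exactly the hypothesis of Theorem~\ref{CLT} with $\bx(0)=\bx^*$. This hypothesis already forces $\bXN(0)\to\bx^*$, so the first assumption is redundant but harmless. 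Moreover, the process $\bZN$ of the corollary coincides with that of Theorem~\ref{CLT}. Hence $\bZN$ converges weakly in $D[0,t]$ to the Gaussian diffusion $\bZ$ with $\bZ(0)=\bz$.

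Next I identify the limit. With $\bx(s)\equiv\bx^*$ we have $B_s=\nabla F(\bx^*)=B$, which is constant. Hence $M_s=\exp(\int_0^s B\,du)=e^{sB}$, and $\E(\bZ(s))=e^{sB}\bz$. Likewise $G(\bx(s))=G(\bx^*)=G$ is constant, so the covariance equation becomes
$$\frac{d\Sigma_s}{ds}=B\Sigma_s+\Sigma_sB^T+G,\qquad \Sigma_0=0,$$
with solution $\Sigma_s=\int_0^s e^{uB}Ge^{uB^T}\,du$.

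These are precisely the mean and covariance of the Ornstein--Uhlenbeck process $d\bZ=B\bZ\,dt+G^{1/2}d\boldsymbol{W}$ started at $\bz$. To conclude that the limit is this process, and not merely that its one-dimensional marginals agree, I need two further facts. The limit in Theorem~\ref{CLT} (from Kurtz's Theorem~3.5 via \cite{Pol90}) is a Gaussian Markov diffusion with generator drift $B_s\bz$ and diffusion matrix $G(\bx(s))$. A Gaussian Markov process is determined by its mean function and transition covariances, equivalently by its drift and diffusion coefficients. With constant coefficients, these coefficients are exactly those defining the OU process with local drift matrix $B$ and local covariance $G$.

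The main obstacle is only this identification step: Theorem~\ref{CLT} as stated records just the mean and covariance. I would handle it by appealing directly to the form of the limit in \cite{Kur71} and \cite{Pol90}, which is characterised as the solution of $d\bZ=B_s\bZ\,ds+dW_s$, where $W$ is a Gaussian martingale with covariance $\int_0^s G(\bx(u))\,du$. Substituting the constant $B$ and $G$ then gives the OU process immediately.
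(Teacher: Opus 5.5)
Your proposal is correct and follows the paper's own route. The paper gives no separate argument: it treats Corollary~\ref{CLTOU} as the immediate specialisation of Theorem~\ref{CLT} to the constant trajectory $\bx(s)\equiv\bx^*$, so that $B_s\equiv B$, $M_s=e^{Bs}$ and $\Sigma_s=\int_0^s e^{Bu}Ge^{B^Tu}\,du$. Your extra step, identifying the limit as an {\OU} process via Kurtz's characterisation $\bZ(s)=\bz+\int_0^s B\bZ(u)\,du+W(s)$ with $W$ a Gaussian martingale of covariance $sG$, usefully makes explicit what the paper leaves implicit.
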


\noindent
In particular, $\bZ(s)$ is normally distributed with mean
$M_s=e^{Bs} \bz$ and covariance matrix
$$ 
\Sigma_s =\int_0^s e^{Bu} G e^{B^T u} du =\Sigma-e^{Bs} \Sigma e^{B^T s},
$$
where $\Sigma$, the stationary covariance matrix, satisfies
$$
B\,\Sigma +\Sigma B^T+G=0.
$$
Note that $\Cov(\bZN(s))=N\Cov(\bXN(s))$, so we may approximate
$\Cov(\bXN(s))$ by $\Sigma/N$.
We can solve $B\Sigma +\Sigma B^T+G=0$ for our model, because
we have, simply,
$$
G_{ii}(\bx^*) = F_i(\bx^*)+2\mu_i x_i^*=2\mu_i x_i^*,
$$
and so $G$ is the diagonal matrix with diagonal entries $2\mu_i x_i^*$.
Writing 
$$
\Sigma=
\left(
\begin{matrix}
{\Var}_0 & \Cov_{01} \\
\Cov_{01}     & {\Var}_1
\end{matrix}
\right),
\quad
\text{and}
\quad
B:=\nabla F(x_0^*,x_1^*)=
\left(
\begin{matrix}
b_{11} &b_{12} \\
b_{21} &b_{22}
\end{matrix}
\right),
$$
we get
\begin{equation}
{\Var}_0=
((b_{12}b_{21}-b_{11}b_{22}-b_{22}^2)\mu_0x_0^*-b_{12}^2\mu_1x_1^*)/\eta,
\label{Var0}
\end{equation}
\begin{equation}
{\Var}_1
=((b_{12}b_{21}-b_{11}b_{22}-b_{11}^2)\mu_1x_1^*-b_{21}^2\mu_0x_0^*)/\eta,
\label{Var1}
\end{equation}
and
\begin{equation}
\Cov_{01}
=(b_{11}b_{12}\mu_1x_1^* + b_{21}b_{22}\mu_0x_0^*)/\eta,
\label{Cov}
\end{equation}
where
\begin{equation}
\eta=b_{11}^2b_{22}-b_{11}b_{12}b_{21}+b_{11}b_{22}^2-b_{12}b_{21}b_{22}.
\label{eta}
\end{equation}

\subsection{Proof that $d=\pm\mu_1\sqrt{\delta}(\lambda-\mu^*(r^{\pm}))$}
\label{appendix4}

On writing out the determinant of $B(\bx^*)$ 
we find that, for $r=r^{\pm}$,
$$
d=d(r)=K(r)(\lambda-\mu^*(r)) + g(\mu^*(r)),
$$
where
\begin{equation}
K(r)=(1-2q)\mu^*(r) 
+ \frac{(\mu_1-\mu_0)q(1-r) + \mu_1 r + \mu_0}{1+r}.
\label{Kr}
\end{equation}
But, $g(\mu^*(r))=0$, and
on substituting for $\mu^*(r)$ (eqn.~(\ref{mustar})) and $r$
(eqn.~(\ref{rpm})) in (\ref{Kr})
and using the expression for $\delta$ (eqn.~(\ref{delta})), we find that
$$
K(r)=\frac{\mu_1^2\delta \pm A\mu_1\sqrt{\delta}}{A \pm \mu_1\sqrt{\delta}},
$$
where $A=(\mu_0-3\mu_1)q + 2\mu_1$. On multiplying the denominator and
the numerator by $A \mp \mu_1\sqrt{\delta}$, we arrive at
$K(r)=\pm \mu_1 \sqrt{\delta}$ ($A$ disappears).

\subsection{Proof that $h$ does not depend on $\bx^*$}
\label{appendix5}

Differentiate $F(\bx)=A\bx + h(\bx)$, where $A=\nabla F(\bzero)$, to get
$\nabla F(\bx)=A + \nabla h(\bx)$.
Now supppose that there is a function~$g$ (necessarily quadratic)
such that $F(\bx)=B(\bx-\bx^*) + g(\bx-\bx^*)$,
where $B= \nabla F(\bx^*)$. Differentiate this to obtain
$\nabla F(\bx)=B + \nabla (g(\bx-\bx^*)) = B+\nabla g(\bx-\bx^*)$.  So,
$A + \nabla h(\bx) =B + \nabla g(\bx-\bx^*)$.
Let $\by=\bx-\bx^*$. Then,
$$
A + \nabla h(\by+\bx^*) =B + \nabla g(\by).
$$
Since $\nabla h$ is linear,
$\nabla h(\by+\bx^*) = \nabla h(\by) + \nabla h(\bx^*)$. Therefore,
$$
A + \nabla h(\by) + \nabla h(\bx^*) =B + \nabla g(\by).
$$
But $A + \nabla h(\bx^*) = \nabla F(\bx^*)=B$, and so
$\nabla h(\by) = \nabla g(\by)$. Thus,
$h(\by)$ and $g(\by)$ differ by a constant. This constant must be
$\bzero$
because $h(\bzero)=F(\bzero)=0$ and $g(\bzero)=F(\bx^*)=0$.
We deduce that $g=h$.

\subsection{The ordering of $\lambda_1$, $\lambda_2$ and $J$}
\label{appendix6}

Assume $q<1$, and recall that
$$
\lambda_1
=\frac{-q(\mu_0+\mu_1)+\sqrt{\delta}}{2(1-2q)}
=\frac{q(\mu_0+\mu_1)-\sqrt{\delta}}{2(2q-1)},
\quad
\lambda_2
=\frac{-q(\mu_0+\mu_1)-\sqrt{\delta}}{2(1-2q)}
=\frac{q(\mu_0+\mu_1)+\sqrt{\delta}}{2(2q-1)},
$$
where
$\delta= (\mu_0 + \mu_1)^2 q^2 - 4\mu_0\mu_1 (2q-1)$ ($>0$).
We first prove that if $q<1/2$, then $\lambda_1<J:=(\mu_0+\mu_1)/(2q)$.
Assuming the contrary, we have a series of implications:
\begin{align*}
\lambda_1\geq J 
&\implies q(-q(\mu_0+\mu_1)+\sqrt{\delta})\geq (1-2q)(\mu_0+\mu_1)\\
&\implies q\sqrt{\delta} \geq (1-q)^2 (\mu_0+\mu_1)\\
&\implies (\mu_0+\mu_1)^2 q^4 + 4\mu_0\mu_1 q^2(1-2q) \geq  (1-q)^4
(\mu_0+\mu_1)^2\\
&\implies 4\mu_0\mu_1 q^2(1-2q) \geq (1-2q)(1-2q+2q^2)(\mu_0+\mu_1)^2\\
&\implies 2q^2
\left( \frac{2\mu_0\mu_1}{(\mu_0+\mu_1)^2}-1\right)\geq 1-2q \ (>0).
\end{align*}
But this cannot be, because
$2\mu_0\mu_1/(\mu_0+\mu_1)^2 \leq 1/2$ 
(harmonic mean $\leq$ arithmetic mean),
and hence the left-hand side is less than or equal to $-q^2<0$.

Next we will prove that if $q>1/2$, then $\lambda_1<J$.
Assuming the contrary, we see that
\begin{align*}
\lambda_1\geq J 
&\implies q(q(\mu_0+\mu_1)-\sqrt{\delta})\geq (2q-1)(\mu_0+\mu_1)\\
&\implies q\sqrt{\delta} \leq (1-q)^2 (\mu_0+\mu_1)\\
&\implies (\mu_0+\mu_1)^2 q^4 - 4\mu_0\mu_1 q^2(2q-1) \leq  (1-q)^4
(\mu_0+\mu_1)^2\\
&\implies 4\mu_0\mu_1 q^2(2q-1) \geq (2q-1)(1-2q+2q^2)(\mu_0+\mu_1)^2\\
&\implies 2q^2
\left( \frac{2\mu_0\mu_1}{(\mu_0+\mu_1)^2}\right)\geq (1-q)^2 + q^2.
\end{align*}
But this is a contradiction, because
the left-hand side is less than or equal to $q^2$,
and the right-hand side is strictly bigger than $q^2$.

Finally we prove that if $q>1/2$, then $\lambda_2>J$.
Assuming the contrary, we see
\begin{align*}
\lambda_2\leq J 
&\implies q(q(\mu_0+\mu_1)+\sqrt{\delta})\leq (2q-1)(\mu_0+\mu_1)\\
&\implies q\sqrt{\delta} \leq -(1-q)^2 (\mu_0+\mu_1),
\end{align*}
which is an immediate contradition.

\bibliographystyle{plain}

\def\cprime{$'$}

\end{document}